\providecommand{\tabularnewline}{\\}
   \definecolor{BLACK}{gray}{0}
   \definecolor{WHITE}{gray}{1}
   \definecolor{RED}{rgb}{1,0,0}
   \definecolor{GREEN}{rgb}{0,1,0}
   \definecolor{BLUE}{rgb}{0,0,1}
   \definecolor{CYAN}{cmyk}{1,0,0,0}
   \definecolor{MAGENTA}{cmyk}{0,1,0,0}
   \definecolor{YELLOW}{cmyk}{0,0,1,0}
\newtheorem{theorem}{Theorem}[section]
\newtheorem{lemma}[theorem]{Lemma}
\newtheorem{corollary}[theorem]{Corollary}
\newtheorem{definition}{Definition}[section]
\newtheorem{remark}[theorem]{Remark}
\newtheorem{example}[theorem]{Example}
\begin{document}

\title{A law of order estimation and leading-order terms for a family of
averaged quantities on a multibaker chain system}

\author{Hideshi Ishida}

\email{ishida@me.es.osaka-u.ac.jp}

\affiliation{Department of Mechanical Science and Bioengineering, 1-3 Machikaneyama,
Toyonaka, Osaka 560-8531, Japan}
\begin{abstract}
In this study a family of local quantities defined on each partition
and its averaging on a macroscopic small region, site, are defined
on a multibaker chain system. On its averaged quantities, a law of
order estimation (LOE) in the bulk system is proved, making it possible
to estimate the order of the quantities with respect to the representative
partition scale parameter $\Delta$. Moreover, the form of the leading-order
terms of the averaged quantities is obtained, and the form enables
us to have the macroscopic quantity in the continuum limit, as $\Delta\rightarrow0$,
and to confirm its partitioning independency. These deliverables fully
explain the numerical results obtained by Ishida, consistent with
the irreversible thermodynamics.
\end{abstract}
\maketitle

\section{Introduction}

It has been widely accepted that a coarse-grained entropy ($\varepsilon$-entropy)
should be employed in order for expanding the nonequilibrium statistical
mechanics based on the Gibbs entropy \cite{gilbert2,vollmer1,vollmer5,vollmer3,vollmer2,ishida1,tasaki1,gilbert3,gilbert1,gaspard2,dorfman2,barra2,barra3,cross1,cross2,cross3,gaspard1,vollmers}.
The coarse graining or the partitioning of the phase space, often
accomplished by a Markov partition, encounters the problem to find
a local or coarse-grained form that recovers the nonequilibrium thermodynamics
in the macroscopic or large system limit. 

The change of the coarse-grained entropy on a macroscopic small region,
e.g. site on a multibaker chain, has been decomposed into three terms:
entropy flow, entropy flow due to a thermostat, and nonnegative entropy
production \cite{gilbert2}, and its similar or extended formalism
is introduced by Vollmer, Breymann, M\'{a}ty\'{a}s, et al. \cite{vollmer1,vollmer5,vollmer3,vollmers,cross1,cross2,cross3}.
Although the entropy balance equation, based on the information theory,
is confirmed to agree with the phenomenological one in the macroscopic
limit, each term does not necessarily agree with a corresponding macroscopic
one: the estimated entropy flow due to a thermostat is contaminated
by a flux term and the contributions of convection and diffusion cannot
be distinguished \cite{ishida2012}. The conventional decomposition
inherently suffers from the need of an appropriate physical principle
when finding a coarse-grained form corresponding to an arbitrary macroscopic
quantity. 

On the other hand, Ishida \cite{ishida1,ishida2012} introduced a
completely different way to find local or coarse-grained terms. It
can be interpreted as the decomposition, not of the quantity of the
coarse-grained entropy change on a macroscopic small region, but of
the time-evolution operator of local entropy density defined on each
partition. In this approach, the decomposition is first performed
on the level of a partition or a master equation, based on the symmetric
properties of macroscopic entropy balance equation, and the local
terms are identified. The properties are symmetry or anti-symmetry
for the inversion of partition, density pairs and a given drift velocity.
Next, a spatio-temporal averaging of the local quantities are performed
on a macroscopic small region. The essence of the coarse graining
of this method lies in the averaging, different from the conventional
method putting emphasis on the partitioning of the phase space. In
this approach the quantity defined on each partition is, therefore,
appropriate to be called not coarse-grained quantity but local one.
Finally, the macroscopic (continuum) limit, where the representative
partition scale $\Delta$ goes to zero, are taken to compare with
macroscopic quantities. On a volume-preserving \cite{ishida1} and
a dissipative \cite{ishida2012} multibaker chain systems, each term
of the local entropy balance equation recovers the corresponding phenomenological
term of irreversible thermodynamics. The mathematical procedure which
stands on the symmetry, in principle, is applicable to find a local,
partition-level form of an arbitrary macroscopic quantity, or to find
the local balance equation corresponding to a macroscopic one for
an arbitrary quantity.

The procedure makes a family of local quantity which is symmetric
or anti-symmetric for some inversions and needs the property for the
limiting behavior of its averaged quantity. It is quantified by the
order exponent \textit{q} of a given averaged quantity which behaves
like $O(\Delta^{q})$ in the macroscopic limit. The order explains
the macroscopic observability and the scale separation of vanishing
quantities: when \textit{q} equals zero, the averaged quantity of
a given local form has a limit value and it is macroscopically observable.
In contrast, when \textit{q} is positive the averaged quantity vanishes
and it is not macroscopically observable. For sufficiently large positive
\textit{q}, the quantity approaches more rapidly to zero in the limit,
and therefore the scale of such a quantity is well separated from
the observable one. Moreover, Ishida \cite{ishida1} showed that the
positivity of entropy production in the volume-preserving system relies
on the property of a local residual term and numerically confirmed
that the order exponent of its averaged quantity is positive. It is
natural for us to consider that an isolated system is governed by
a volume-preserving dynamical system, and therefore, the behavior
is responsible for the law of increasing entropy. Thus, the exponent
\textit{q} is physically essential.

In this study, a family of local quantities defined on each partition
and its averaging are first defined (Sec. \ref{sec:Averaging}). Then,
we shall prove the law of order estimation (LOE) for the averaged
values of the local quantities on the multibaker chain system (Sec.
\ref{sec:LOE}). It is applicable to a family of local quantity in
the bulk system, and all of the important orders mentioned above can
be estimated. The estimation law is a slight extension originally
conjectured by Ishida \cite{ishida2012} from the numerical experiments
on two multibaker chain systems\cite{ishida1,ishida2012}. Finally,
we shall obtain the form of the leading-order terms of the averaged
quantities in Sec. \ref{sec:leading-term}. It is quite important
because it is just the form of the macroscopic quantity when \textit{q}
equals zero. In the coarse graining approach it is expected that these
results are independent of partitioning, and the leading form also
enables us to discuss the independency that has been numerically confirmed
on the multibaker chain \cite{ishida1,ishida2012}.

\section{Averaging of a family of local quantities\label{sec:Averaging}}

Ishida \cite{ishida2012} introduced a family of local quantities
defined on each partition. They are found by the decomposition of
the equations for local probability density and local entropy density,
based on symmetric properties. When the local quantities is averaged
on a macroscopic small region, Ishida numerically confirmed that the
two decomposed equations recover the corresponding macroscopic balance
equations of the irreversible thermodynamics.

Now let us consider a slightly extended family of local quantities,
called \textit{T}-type local quantities and its averaged value. In
what follows, we utilize the notation of Ishida \cite{ishida2012}.

\begin{definition}[\textit{T}-type local quantity]\label{def:TLQ}
Let \textit{i},\textit{j} be integers to identify the partition number
of a Markov partition, and suppose that the transition volume $\tilde{W}_{ji}$
or $\tilde{W}_{ij}$ is nonzero, where $\tilde{W}_{ji}\equiv W_{ji}\Delta V_{i}$
by use of the transition probability $W_{ji}$ from \textit{i}-th
to \textit{j}-th partition and the partition volume (Liouville measure)
$\Delta V_{i}$ of the \textit{i}-th partition. Let $e_{i}$, defined
by $\Delta V_{i}/\sum_{k}\tilde{W}_{ik}$, denote the expansion rate
on the \textit{i}-th partition and $\rho_{i}^{(l)}$, defined by the
probability measure $P_{i}^{(l)}$ on the \textit{i}-th partion at
the \textit{l}-th step over the partition volume $\Delta V_{i}$,
denote the local density. Suppose also that integers $\epsilon$ and
$\delta$ are either 1 or -1 and that $f(x,y)$ is symmetric or antisymmetric
for the interchange of \textit{x} and \textit{y}. Then we define the
following family $A_{i}^{(l)}$ of local quantity on the \textit{i}-th
(arbitrary) partition at the \textit{l}-th time step, called a transitional
type (\textit{T}-type) local quantity:

\begin{subequations}\label{eq:Ai+Aij}

\begin{eqnarray}
A_{i}^{(l)} & \equiv & \sum_{j}A_{ij}^{(l)},\label{eq:Ai}\\
A_{ij}^{(l)} & \equiv & \left[(1+\delta e_{j})\tilde{W}_{ji}+\epsilon(1+\delta e_{i})\tilde{W}_{ij}\right]f(\rho_{j}^{(l)},\rho_{i}^{(l)}).\label{eq:Aij}
\end{eqnarray}
\end{subequations} \end{definition}

\begin{remark}The summand $A_{ij}^{(l)}$ is symmetric or antisymmetric
for the interchange of partition pair (i,j) and for the inversion
of external parameter, such as the drift velocity \textit{v}, for,
at least, the multibaker chain system. The quantity is transitional
because it is made by the decomposition of a transport equation associated
with a master equation. As an arbitrary function \textit{g}(\textit{x},\textit{y})
consists of symmetric and antisymmetric parts, the local quantity
of the form $\sum_{j}\tilde{W}_{ij}g(\rho_{j}^{(l)},\rho_{i}^{(l)})$
can be expressed as the summation of \textit{T}-type local quantities.\end{remark}

\begin{example}Set $\delta=\epsilon=-1$ and $f=(\rho_{j}^{(n)}+\rho_{i}^{(n)})/4$.
Then
\begin{equation}
A_{ij}^{(l)}=\frac{\tilde{U}_{j\leftarrow i}^{[a]}}{\Delta_{i,j}}\frac{\rho_{j}^{(l)}+\rho_{i}^{(l)}}{2},\,\tilde{U}_{j\leftarrow i}^{[a]}\equiv\frac{(1-e_{j})\tilde{W}_{ji}-(1-e_{i})\tilde{W}_{ij}}{2}\Delta_{i,j},\label{eq:exam1}
\end{equation}
where $\Delta_{i,j}$ is expected to reflect the distance between
\textit{i}-th and \textit{j}-th partitions. This is a component of
local probability density change\cite{ishida2012}.\end{example}

\begin{example}\label{exm:rs}The residual entropy source $r_{s,i}^{(l)}$
(Eq. (20f) of Ref. \onlinecite{ishida2012}) can be expressed by the
summation of some\textit{ T}-type quantities. \end{example}

Hereafter, we also assume that the function \textit{f} is of class
$C^{\infty}$ and has a nonnegative integer \textit{m}, called a characteristic
exponent, and a characteristic function $\hat{C}(x)$, defined as
follows.

\begin{definition}[Characteristic exponent and characteristic function]

If there exists the following power index \textit{m} and nonzero bounded
function $ $$\hat{C}(x)$:
\begin{equation}
^{\forall}s,t(\neq s)\in\mathbb{R},\,^{\exists}m\in\mathbb{R},\,\lim_{h\rightarrow0}\frac{f(x+sh,x+th)}{\left(\frac{s-t}{2}h\right)^{m}}=\hat{C}(x),\label{eq:defChat}
\end{equation}
then the exponent \textit{m} and the function $ $$\hat{C}(x)$ are
called a characteristic exponent and a characteristic function, respectively.

\end{definition}

\begin{remark}

Assume that a function \textit{f} of the variable \textit{$X(\equiv(x-y)/2)$}
and \textit{$Y(\equiv(x+y)/2)$} is of class $C^{1}$ in the neighborhood
of $X=0$. Then the characteristic exponent \textit{m} can be expressed
as:

\[
m=\lim_{X\rightarrow0}\frac{X}{f}\frac{\partial f(X,Y)}{\partial X}.
\]

\end{remark}

\begin{definition}[$T^1$-type local quantity]

Assume that a function $f(x,y)$ to define a \textit{T}-type local
quantity is of class $C^{\infty}$ and that \textit{f} has a constant,
nonnegative integer characteristic exponent \textit{m} and a characteristic
function $ $$\hat{C}(x)$. Then this type of local quantity $A_{i}^{(l)}$
is called $T^{1}$-type, expressed by $A^{(l)}(f,m,\hat{C})$.\end{definition}

\begin{remark}It imply that the function \textit{f} can be expressed
as
\begin{eqnarray}
f(x,y) & = & \tilde{f}(X,Y)\equiv\tilde{f}(\frac{x-y}{2},\frac{x+y}{2})\nonumber \\
 & = & \hat{C}(Y)X^{m}+C^{(m+2)}(Y)X^{m+2}+C^{(m+4)}(Y)X^{m+4}+\cdots,\label{eq:expansion_f}
\end{eqnarray}
in a neighborhood of $X=0$.\end{remark}

\begin{remark}When the absolute value of \textit{X} is sufficiently
small, \textit{f} can be well approximated by $ $$\hat{C}(Y)X^{m}$.
Replacing \textit{f} in Eq. (\ref{eq:Aij}) with $\hat{C}(Y)X^{m}$
and comparing the local quantity $A_{ij}^{(n)}$ with Table 1 in Ref.
\onlinecite{ishida2012}, we are easily convinced that the quantity
of the present study is an extension in the sense that the symmetric
part $\hat{C}(Y)\,(Y\equiv(\rho_{j}^{(l)}+\rho_{i}^{(l)})/2)$ is
not restricted to the product of the power of two symmetric order
components $Y$ and $\ln Y$.\end{remark}

For the above family of local quantity $A_{i}^{(l)}$, Eq. (\ref{eq:Aij}),
we can introduce its averaged quantity $\left\langle A\right\rangle _{R}^{(l)}$
on a macroscopic small region \textit{R} \cite{ishida2012}.

\begin{definition}[Spatio-temporal averaging on $R$]

Let \textit{R} denote a union of partitions of a Markov partition,
$\Delta V_{R}$ the volume of the region \textit{R}, $\tau$ timely
increment (time step) of a given map. For a local quantity $A_{i}^{(l)}$,
defined on the \textit{i}-th partition at the \textit{l}-th step,
we can define its spatio-temporal averaging on \textit{R} as follows:

\begin{equation}
\left\langle A\right\rangle _{R}^{(l)}=\frac{1}{\tau\Delta V_{R}}\sum_{i\subset R}A_{i}^{(l)},\label{eq:def_<A>^(l)_R}
\end{equation}
where $i\subset R$ means that the quantity $A_{i}^{(l)}$ is summed
with respect to all the partitions included in the region \textit{R}.
\end{definition}

In the next section we see that a law of order estimation (LOE) holds
for the averaged quantity in the bulk of a multibaker chain system.

\section{Law of order estimation (LOE)\label{sec:LOE}}

Hereafter we shall confine our discussion to a triadic multibaker
chain, introduced by Vollmer et al. \cite{vollmer1,vollmer2,vollmer3,vollmer5}
and Ishida \cite{ishida1,ishida2012}. 

\begin{definition}[Multibaker-chain map]

Let $R_{m}\equiv[0,N]\times[0,\Delta r]\times[0,h]$ be a domain defined
on $\mathbb{Z\times\mathbb{R^{\textrm{2}}}}$. Consider the following
map $T:\, R_{m}\rightarrow R_{m}$, called a multibaker-chain map.
For the case of $0<n<N$,

\begin{align}
 & T(n,x,y)=\nonumber \\
 & \begin{cases}
(n-1,x/\eta_{n}(0),\nu_{n-1}(0)y), & 0\leq x<\eta_{n}(0)\Delta r,\\
(n,(x-\eta_{n}(0)\Delta r)/\eta_{n}(1),\nu_{n}(0)h+\nu_{n}(1)y), & \eta_{n}(0)\Delta r\leq x<(\eta_{n}(0)\\
 & +\eta_{n}(1))\Delta r,\\
(n+1,[x-(\eta_{n}(0)+\eta_{n}(1))\Delta r]/\eta_{n}(2), & (\eta_{n}(0)+\eta_{n}(1))\Delta r\\
(\nu_{n+1}(0)+\nu_{n+1}(1))h+\nu_{n+1}(2)y), & \leq x\leq\Delta r,
\end{cases}\label{eq:defT}
\end{align}
and an appropriate map is given for the case of $n=0$ or \textit{N}.
Herein (\textit{n},\textit{x},\textit{y}) is regarded as a position
(\textit{x},\textit{y}) on the \textit{n}th site, $A_{n}\subset\mathbb{R^{\textrm{2}}}$,
and $A_{0}$ and $A_{N}$ is regarded as two boundary sites. The invertible
condition\cite{vollmer1,vollmer5,ishida2012} yields $\nu_{n}(\omega)=\eta_{n}(2-\omega)$
.

\end{definition}

\begin{remark}

By convention, the domain\textit{ $R_{m}$} is regarded as a one-dimensional
chain of (\textit{N}+1) rectangular sites, and therefore, the map
is called a multibaker chain system. Each site has the origin fixed
at the lower left corner and the coordinates \textit{x} and \textit{y}
are the horizontal and vertical direction, respectively. We shall
take \textit{r} direction in which two disjoint sites connect, and
the width $\Delta r$ is treated as the distance between the two sites.\end{remark}

In what follows, a restricted type of multibaker chain system is dealt
with.

\begin{definition}[UOFP-type multibaker-chain map]\label{def:UOFP}

Consider the following transient probability $\eta_{n}(\omega)$ on
each site $A_{n}$ that is uniform except the two boundary sites.

\begin{equation}
\eta_{n}(0)=\frac{\beta}{2}(1-\frac{Pe_{g}}{2}),\eta_{n}(2)=\frac{\beta}{2}(1+\frac{Pe_{g}}{2}),\eta_{n}(1)=1-\beta,\label{eq:eta-nu}
\end{equation}
where $\beta\equiv2\tau D/\Delta r^{2}$, and \textit{D} denotes a
diffusion coefficient, $\tau$ a time step. $Pe_{g}(\equiv v\Delta r/D)$
is the so-called grid P\'{e}clet number \cite{patankar}. The condition
that $\eta(\omega)$ should be positive leads to the order of the
time step $\tau$ must be greater than or equal to the order of $\Delta r^{2}$,
and Ishida \cite{ishida1,ishida2012} fixed the order at $\Delta r^{2}$.
In other words, the order exponent of $\beta$ with respect to $\Delta r$
must be nonnegative. At the boundary sites \textit{n}=0 and \textit{N},
other transient probabilities are given, determining the macroscopic
boundary condition. Once all of the transition probabilities are given,
a Markov partition $(n,\underline{\omega}_{k})$ on the \textit{n}th
site, the partition volume $\Delta V(n,\underline{\omega}_{k})$ on
the partition, and the probability measure $P^{(l)}(n,\underline{\omega}_{k})$
at the \textit{l}th step are defined, where $\underline{\omega}_{k}(\equiv\omega_{0}\omega_{1}\cdots\omega_{k-1})$
is a \textit{k}-digit trit number specifying a partition of width
$\Delta r$ by height $h\times\nu_{n}(\underline{\omega}_{d})$, forming
a line in its numerical order from bottom to top on each site \cite{ishida2012}.
The ``bulk'' of this system is defined by the sites $A_{n}$ for
$0<n-k<n<n+k<N$ and time step $l>k$.

In this system, the limit of the partition scale $\Delta\equiv\Delta r\rightarrow0$
with the total length \textit{$L\equiv N\Delta r$} and trit number
\textit{k} fixed defines a continuum limit. In this limit the averaged
equation for $P^{(l)}(n,\underline{\omega}_{k})$ on a bulk site $A_{n}$
recovers the one-dimensional Fokker-Planck equation for macroscopic
(averaged) measure density $\rho$ \cite{ishida2012}. In addition,
the distance between the bulk domain composed by all bulk sites and
the boundary sites approaches to zero. Hereafter, the multibaker-chain
map (system) mentioned above is called a uniform, one-dimensional
Fokker-Planck (UOFP) type .\end{definition} 

\begin{remark}The type is independent of the dynamics on the boundary
sites.\end{remark}

The definition of the bulk system for the UOFP-type multibaker chain
allows us to express the local density on each partition by
\begin{eqnarray}
\rho^{(l)}(n,\underline{\omega}_{k}) & \equiv & P^{(l)}(n,\underline{\omega}_{k})/\Delta V(n,\underline{\omega}_{k})\nonumber \\
 & = & \frac{\eta_{n+1-\omega_{0}}(\omega_{0})P^{(l-1)}(n+1-\omega_{0},\stackrel{\leftarrow}{\underline{\omega}}_{k-1})}{\nu_{n}(\omega_{0})\Delta V(n+1-\omega_{0},\stackrel{\leftarrow}{\underline{\omega}}_{k-1})}\nonumber \\
 & = & \frac{\eta(\omega_{0})}{\nu(\omega_{0})}\rho^{(l-1)}(n+1-\omega_{0},\stackrel{\leftarrow}{\underline{\omega}}_{k-1}),\label{eq:rho(n,wk)}
\end{eqnarray}
where $\stackrel{\leftarrow}{\underline{\omega}}_{k-1}\equiv\omega_{1}\cdots\omega_{k-1}$
and this is the so-called left shift of symbolic dynamics. Please
note herein that the volume expansion rate $e(n,\underline{\omega}_{k})$,
i.e. the reciprocal of volume contraction rate, at a partition $(n,\underline{\omega}_{k})$
can be written as
\begin{eqnarray}
e(n,\underline{\omega}_{k}) & = & \Delta V(n,\underline{\omega}_{k})/\left[\eta_{n+1-\omega_{0}}(\omega_{0})\Delta V(n+1-\omega_{0},\underline{\overleftarrow{\omega}}_{k-1})\right]\nonumber \\
 & = & \nu_{n}(\omega_{0})/\eta_{n+1-\omega_{0}}(\omega_{0})\label{eq:e(n,wk)}
\end{eqnarray}
from the definition of expansion rate (Def. \ref{def:TLQ}). It is
essential that the relation (\ref{eq:rho(n,wk)}) holds recursively
and the product of the contraction rates is independent of the site
number \textit{n} while the \textit{n}th site considered is in the
bulk. Conversely, if the \textit{n}th site is outside the bulk, the
product must be contaminated by the discontinuous contraction rate
given at the boundary sites, and Ishida \cite{ishida2012} has numerically
confirmed that the LOE, proved below, is not fulfilled for such non-bulk
sites near the boundaries. This is the onset of boundary effects \cite{cross1,ishida2012}.
In what follows, the subscripts of $\eta$ and $\nu$ are omitted
because these transient probabilities are independent of the site
number \textit{n} in the bulk system.

Substituting Eqs. (\ref{eq:rho(n,wk)}) and (\ref{eq:e(n,wk)}) into
Eq. (\ref{eq:Aij}), we obtain the following form of the averaged
quantity $\left\langle A\right\rangle _{R}^{(l)}$, Eq. (\ref{eq:def_<A>^(l)_R}),
on a bulk site $R=A_{n}$ for a $T^{1}$-type local quantity $A^{(l)}(f,m,\hat{C})$
as follows:

\begin{subequations}\label{eq:cf_<A>^(l)_An}
\begin{equation}
\left\langle A\right\rangle _{A_{n}}^{(l)}\text{\ensuremath{\equiv}}K/(\tau\Delta r),\label{eq:<A>^(l)_n}
\end{equation}
where
\begin{eqnarray}
K & = & \sum_{\omega}\sum_{\omega'}\sum_{\underline{\omega}_{k}}\left[\left(\eta(\omega')+\delta\nu(\omega')\right)\nu(\underline{\omega}_{k})\Delta r\right.\nonumber \\
 &  & \times f\left(\rho^{(l)}(n+\omega'-1,\omega'\underline{\omega}_{k-1}),\,\rho^{(l)}(n,\underline{\omega}_{k})\right)\nonumber \\
 & + & \epsilon\left(\eta(\omega_{0})+\delta\nu(\omega_{0})\right)\nu(\underline{\overleftarrow{\omega}}_{k-1}\omega)\Delta r\nonumber \\
 &  & \left.\times f\left(\rho^{(l)}(n+1-\omega_{0},\underline{\overleftarrow{\omega}}_{k-1}\omega),\,\rho^{(l)}(n,\underline{\omega}_{k})\right)\right]\nonumber \\
 & = & \sum_{\omega}\sum_{\underline{\omega}_{k}}\left(\eta(\omega)+\delta\nu(\omega)\right)\nu(\underline{\omega}_{k})\mathcal{F}_{n}^{(l)}(\epsilon,m,\omega,\underline{\omega}_{k})\Delta r,\label{eq:K}
\end{eqnarray}
and
\begin{eqnarray}
\mathcal{F}_{n}^{(l)}(\epsilon,m,\omega,\underline{\omega}_{k}) & \equiv & F_{n}^{(l)}(\omega,\underline{\omega}_{k})+\epsilon(-1)^{m}F_{n+1-\omega}^{(l)}(\omega,\underline{\omega}_{k}),\label{eq:defcalF^(l)_n}\\
F_{n}^{(l)}(\omega,\underline{\omega}_{k}) & \equiv & f\left(\frac{\eta(\omega)}{\nu(\omega)}\rho^{(l-1)}(n,\stackrel{\leftarrow}{\underline{\omega}}_{k-1})\right.\nonumber \\
 &  & \left.,\frac{\eta(\omega_{0})}{\nu(\omega_{0})}\rho^{(l-1)}(n+1-\omega_{0},\stackrel{\leftarrow}{\underline{\omega}}_{k-1})\right).\label{eq:F}
\end{eqnarray}
\end{subequations}

The following lemma plays essential role for estimating the order
of \textit{K} or $\left\langle A\right\rangle _{A_{n}}^{(l)}$.

\begin{lemma}\label{lem:order_K} The order exponent of $K/(\beta\Delta r)$
for $m\neq0$ or $K/\Delta r$ for $m=0$ with respect to $\Delta r$
is even number, determined by the property of the summation with respect
to only two trinary bits $\omega$ and $\omega_{0}$.\end{lemma}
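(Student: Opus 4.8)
The plan is to convert the order estimate into a parity statement in $\Delta r$. First I would substitute the characteristic expansion (\ref{eq:expansion_f}), $f=\hat{C}(Y)X^{m}+C^{(m+2)}(Y)X^{m+2}+\cdots$, into $F_{n}^{(l)}$ (\ref{eq:F}) and $\mathcal{F}_{n}^{(l)}$ (\ref{eq:defcalF^(l)_n}), so that $K$ becomes a power series in the density half-differences $X$. Using the recursion (\ref{eq:rho(n,wk)}), the two arguments of each $f$ in $F_{n}^{(l)}$ share the common shifted word $\overleftarrow{\underline{\omega}}_{k-1}$ and differ only through the ratios $\eta(\omega)/\nu(\omega)$ and $\eta(\omega_{0})/\nu(\omega_{0})$ and the one-site shift $n\mapsto n+1-\omega_{0}$. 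Since by (\ref{eq:eta-nu}) these ratios are $1+O(\Delta r)$---the asymmetry being carried solely by $Pe_{g}=v\Delta r/D$---and neighbouring sites differ by $\Delta r$ in the continuum variable, each $X$ is $O(\Delta r)$ with leading coefficient depending only on the two bits $\omega$ and $\omega_{0}$ named in the statement.

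The core is a reflection identity. From (\ref{eq:eta-nu}) together with $\nu(\omega)=\eta(2-\omega)$ one checks that the substitution $\Delta r\mapsto-\Delta r$ (equivalently $Pe_{g}\mapsto-Pe_{g}$, while $\beta\propto\tau/\Delta r^{2}$ remains even) carries $\eta(\omega)\mapsto\eta(2-\omega)$ and $\nu(\omega)\mapsto\nu(2-\omega)$; hence the prefactor $\eta(\omega)+\delta\nu(\omega)$, the measure factor $\nu(\underline{\omega}_{k})$, and every ratio $\eta(\omega)/\nu(\omega)$ are sent to their values at the reflected trit values $\omega\mapsto2-\omega$, $\omega_{j}\mapsto2-\omega_{j}$. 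At the same time, expanding all site-indexed densities about a fixed bulk position $r$, the displacement $(1-\omega_{0})\Delta r$ changes sign under $\omega_{0}\mapsto2-\omega_{0}$, which is precisely the effect of $\Delta r\mapsto-\Delta r$ on the Taylor series. I would then prove, by induction on the depth of the recursion (\ref{eq:rho(n,wk)}), that the $\Delta r$-expansion of each $\rho^{(l-1)}$ is invariant under the joint operation ``reflect all trit values'' together with $\Delta r\mapsto-\Delta r$. Granting this, $F_{n}^{(l)}$ and $\mathcal{F}_{n}^{(l)}$ evaluated at $-\Delta r$ coincide with the same objects at the reflected bits and $+\Delta r$, the built-in constant $\epsilon(-1)^{m}$ occurring identically on both sides.

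With the reflection identity the estimate is immediate. Because reflecting all trit values is a bijection of the summation set, relabelling the sum in (\ref{eq:K}) shows that the bracket $\sum(\eta(\omega)+\delta\nu(\omega))\nu(\underline{\omega}_{k})\mathcal{F}_{n}^{(l)}$ is invariant under $\Delta r\mapsto-\Delta r$; the lone explicit factor $\Delta r$ in (\ref{eq:K}) then forces $K(-\Delta r)=-K(\Delta r)$, so $K$ is odd in $\Delta r$. Dividing by the odd quantity $\Delta r$ (for $m=0$) or by $\beta\Delta r$ (odd, since $\beta$ is even; for $m\neq0$) yields an even function of $\Delta r$, whose lowest surviving power---its order exponent---is therefore an even integer. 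Finally, to see that this exponent is controlled by $\omega,\omega_{0}$ alone, I would sum out the inner digits using probability conservation $\sum_{\omega_{j}}\nu(\omega_{j})=1$: since both arguments of each $f$ carry the \emph{same} word, the word-dependence cancels in $X$ at leading order, the inner-digit sums collapse to unity, and a reduced double sum over $\omega$ and $\omega_{0}$ remains to fix the leading power.

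The hard part will be the inductive reflection identity for the densities: the left shift reshuffles the word at every recursion step, so one must verify that reflecting all bits commutes with the shift and that the accumulated displacements transform as a single sign flip, matching $\Delta r\mapsto-\Delta r$ term by term in the expansion about $r$. Holding the continuum position $r$ and its derivatives $\partial_{r}^{p}\rho$ fixed while formally reversing the sign of $\Delta r$ is the delicate bookkeeping; once it is in place, the oddness of $K$ and hence the evenness of the order exponent follow with no further computation.
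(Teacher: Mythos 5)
Your argument is correct, but its central mechanism is genuinely different from the paper's. The paper proves Lemma \ref{lem:order_K} by computing the trit moments explicitly: it introduces $S_{\nu}(a)\sim\sum_{\omega}\nu(\omega)I_{\omega}^{a}$ and $S_{\eta\nu}(\delta,a)\sim\sum_{\omega}(\eta(\omega)+\delta\nu(\omega))I_{\omega}^{a}$, Eqs.\ (\ref{eq:S_nu(a)})--(\ref{eq:S_etanu}), notes that the odd moments equal $\beta Pe_{g}/2$ and hence sit one order higher in $\Delta r$ than the even ones, and then observes that every power of $\Delta r$ or $Pe_{g}$ in the summand of $\mathcal{F}_{n}^{(l)}$ arrives attached to $I_{\omega}$ or $I_{\omega_{0}}$ (Eqs.\ (\ref{eq:etabynu})--(\ref{eq:rho(n+1-omg)})), so every odd-order term is promoted by one order under the $\omega,\omega_{0}$ summations; the inner-digit sum is dismissed as $O(\Delta^{0})$ via Eq.\ (\ref{eq:sum_nugA}). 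You obtain the same even/odd structure globally from the reflection $\Delta r\mapsto-\Delta r$, $\omega_{j}\mapsto2-\omega_{j}$, which fixes $\eta$, $\nu$, $\eta/\nu$ and all displacements $I_{\omega_{j}}\Delta r$, making the bracket in Eq.\ (\ref{eq:K}) invariant and $K$ odd --- indeed your symmetry is exactly the structural reason the paper's odd moments are $O(Pe_{g})$. Each route buys something. Yours is more robust in one respect: the parity of the full sum is exact at all orders, so it automatically controls the $O(\Delta r)$ remainders generated by the inner-digit sum in Eq.\ (\ref{eq:sum_nugA}), which the paper's term-by-term promotion argument treats rather casually; it also makes the claim ``determined by $\omega$ and $\omega_{0}$ only'' transparent (though your phrase that the inner sums ``collapse to unity'' is loose --- the word dependence survives multiplicatively and collapses to site-level quantities plus corrections, which is precisely Eq.\ (\ref{eq:sum_nugA})). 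The paper's route is constructive: $S_{\nu}$ and $S_{\eta\nu}$ record exactly which moments survive and with what coefficients ($\beta$, $\beta Pe_{g}/2$, $\delta\pm1$), and these are reused verbatim to prove Theorem \ref{theorem:LOE} and the leading-order forms (\ref{eq:Leadingform}); your parity argument yields evenness but does not by itself show that for $m\neq0$ the leading term of $K/\Delta r$ actually carries the factor $\beta$ that motivates the normalization $K/(\beta\Delta r)$ in the statement. Finally, both arguments share the convention, which you correctly flag, that $\tau$ (hence $\beta$) is held fixed under the formal substitution; with that reading your induction on the recursion (\ref{eq:rho(n,wk)}) does go through, since unrolling it expresses each density as a product of invariant ratios $\eta(\omega_{j})/\nu(\omega_{j})$ times a site-level density displaced by $\sum_{j}I_{\omega_{j}}\Delta r$, which is itself invariant under the joint operation.
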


\begin{proof}Firstly, we define the following function of an integer
\textit{a}: 
\begin{eqnarray}
S_{\nu}(a) & \equiv & \begin{cases}
1, & a=0,\\
\frac{\beta Pe_{g}}{2}, & a:odd,\\
\beta, & a(\neq0):even,
\end{cases}\label{eq:S_nu(a)}
\end{eqnarray}
and it is originated from the quantity $\sum_{\omega}\nu(\omega)I_{\omega}^{a}$,
where $I_{\text{\ensuremath{\omega}}}\text{\ensuremath{\equiv}1--}\omega$.
For $a\neq0$, the definition of bulk transient probability (\ref{eq:eta-nu})
and the invertible condition (Def. \ref{def:TLQ}) have the quantity
agree with $S_{\nu}(a)$. The definition at $a=0$ is useful to express
the summation with respect to $\omega$ when the summand does not
include the form $I_{\omega}^{a}$ of $a\neq0$. Herein, it is important
to note that the function increases its order by one with respect
to $\Delta$ when \textit{a} changes from an even number to an odd
one because the order of $Pe_{g}$ is $O(\Delta)$. In addition, we
define the following function related to the above function 
\begin{equation}
S_{\eta\nu}(\delta,a)\equiv(\delta+(-1)^{a})S_{\nu}(a)\sim\sum_{\omega}(\eta(\omega)+\delta\nu(\omega))I_{\omega}^{a},\label{eq:S_etanu}
\end{equation}
and, therefore, the function inherit the order-increase property.

On the other hand, when the partition scale $\Delta r$ is sufficiently
small we have the following relations:\begin{subequations}\label{eq:sum-operations}
\begin{equation}
\frac{\eta(\omega)}{\nu(\omega)}=1+\sum_{n=1}^{\infty}(-1)^{n}\frac{(Pe_{g}I_{\omega})^{n}}{2^{n-1}},\label{eq:etabynu}
\end{equation}
\begin{eqnarray}
\rho^{(l-1)}(n+1-\omega_{0},\stackrel{\leftarrow}{\underline{\omega}}_{k-1}) & = & \rho^{(l-1)}(n,\stackrel{\leftarrow}{\underline{\omega}}_{k-1})+\left(\frac{\partial\rho}{\partial r}\right)^{(l-1)}(n,\stackrel{\leftarrow}{\underline{\omega}}_{k-1})(I_{\omega_{0}}\Delta r)^{1}\nonumber \\
 & + & O^{(l-1)}\left((I_{\omega_{0}}\Delta r)^{2}\right),\label{eq:rho(n+1-omg)}
\end{eqnarray}
\begin{equation}
\sum_{\stackrel{\leftarrow}{\underline{\omega}}_{k-1}}\nu(\stackrel{\leftarrow}{\underline{\omega}}_{k-1})g\left(A_{n}^{(l-1)}(\stackrel{\leftarrow}{\underline{\omega}}_{k-1})\right)=g(A_{n}^{(l)})+O(\Delta r),\label{eq:sum_nugA}
\end{equation}
\end{subequations}where $A_{n}^{(l)}(\underline{\omega}_{k})$ is
a family of quantities that fulfills recursively the following relation
\begin{equation}
A_{n}^{(l)}(\underline{\omega}_{k})=\frac{\eta(\omega_{0})}{\nu(\omega_{0})}A_{n+1-\omega_{0}}^{(l-1)}(\underline{\overleftarrow{\omega}}_{k-1}),\label{eq:defA^(l)_n}
\end{equation}
such as $\rho^{(l-1)}(n,\stackrel{\leftarrow}{\underline{\omega}}_{k-1})$
and $\left(\partial\rho/\partial r\right)^{(l-1)}(n,\stackrel{\leftarrow}{\underline{\omega}}_{k-1})$.
Herein, $A_{n}^{(l)}$, defined by $A_{n}^{(l)}(\underline{\omega}_{0})$,
is a family of quantities given on the \textit{n}th site. 

It is also worth noting that the scale parameter $\Delta r$ or $Pe_{g}$
in Eq. (\ref{eq:sum-operations}) is accompanied by $I_{\omega}$
or $I_{\omega_{0}}$. It follows that odd-ordered terms with respect
to $\Delta r$ in the summand of \textit{$\mathsf{\mathbb{\mathfrak{\mathcal{F}}}}_{n}^{(l)}$,}
Eq. (\ref{eq:defcalF^(l)_n}), increases their order by one through
the operations (\ref{eq:S_nu(a)}) and (\ref{eq:S_etanu}). Consequently,
the order exponent of $K/(\beta\Delta r)$ for the case of $m\neq0$
or $K/\Delta r$ for $m=0$ must be even number. Note herein that
the coefficient $\beta$ does not appear for the case of $m=0$ because
$S_{\nu}(0)=1$. In addition, Eq. (\ref{eq:sum_nugA}) shows that
the summation with respect to $\stackrel{\leftarrow}{\underline{\omega}}_{k-1}$
is order one ($O(\Delta^{0}))$ and does not affect the order of \textit{K}.
That's why we can estimate the order of \textit{K }or\textit{ $\left\langle A\right\rangle _{A_{n}}^{(l)}$}
from the properties of the summation with respect to only two trinary
bits $\omega$ and $\omega_{0}$. \end{proof}

\begin{remark}Eqs. (\ref{eq:defChat}), (\ref{eq:sum-operations})
and (\ref{eq:defA^(l)_n}) show that the leading-order terms of $F_{n}^{(l)}(\omega,\underline{\omega}_{k})$,
Eq. (\ref{eq:F}), is $O(\Delta^{m})$ and that the leading terms
are identical to that of $F_{n+1-\omega}^{(l)}(\omega,\underline{\omega}_{k})$.
Thus the leading-order term of \textit{$\mathsf{\mathbb{\mathfrak{\mathcal{F}}}}_{n}^{(l)}$,}
Eq. (\ref{eq:defcalF^(l)_n}), has the coefficient of the quantity
$1+\epsilon(-1)^{m}$.\end{remark}

\begin{remark}In Eq. (\ref{eq:rho(n+1-omg)}), the term $\left(\partial\rho/\partial r\right)^{(l-1)}(n,\stackrel{\leftarrow}{\underline{\omega}}_{k-1})$
is merely a formal expression based on the Taylor expansion. From
the recursive definition (\ref{eq:def_<A>^(l)_R}), it is defined
if and only if the derivative of $\rho_{n}^{(l)}(\underline{\omega}_{0})$
with respect to \textit{r} at $r=n\Delta r$ is defined in the continuum
limit, $\Delta r\rightarrow0$. The property fully depends on the
appropriateness of both the macroscopic small region \textit{$A_{n}$}
and the transient probabilities.\end{remark}

Now we are in a position to prove the following law of order estimation
that enable us to estimate the order of the averaged quantity $\left\langle A\right\rangle _{A_{n}}^{(l)}$. 

\begin{theorem}[\bf{LOE: Law of order estimation}]\label{theorem:LOE}Consider
the averaged value $\left\langle A\right\rangle _{A_{n}}^{(l)}$ of
a $T^{1}$-type local quantity $A^{(l)}(f,m,\hat{C})$ for a UOFP-type
multibaker chain system partitioned by a Markov partition $(n,\underline{\omega}_{k})$.
The order of $\left\langle A\right\rangle _{A_{n}}^{(l)}$ can be
expressed as follows:

\begin{description}

\item[(Case I)]

\begin{subequations}\label{eq:LOE}

If the characteristic exponent \textit{m} is an even number, then
\begin{equation}
\left\langle A\right\rangle _{A_{n}}^{(l)}=\frac{K}{\tau\Delta r}\sim\begin{cases}
O(\tau^{-1}), & m=0\textrm{ and }\epsilon=\delta=1,\\
O(\Delta r^{m-2}), & m>0\textrm{ and }\epsilon=\delta=1,\\
O(\Delta r^{m}), & \textrm{elsewhere.}
\end{cases}\label{eq:LOE_caseI}
\end{equation}

\item[(Case II)]If \textit{m} is an odd number, then
\begin{equation}
\left\langle A\right\rangle _{A_{n}}^{(l)}=\frac{K}{\tau\Delta r}\sim\begin{cases}
O(\Delta r^{m-1}), & \epsilon=-1\textrm{ or }\delta=1,\\
O(\Delta r^{m+1}), & \textrm{elsewhere}.
\end{cases}\label{eq:LOE_caseII}
\end{equation}

\end{subequations}\end{description}\end{theorem}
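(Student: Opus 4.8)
The plan is to reduce everything to the order of $K$ in Eq.~(\ref{eq:K}), since $\langle A\rangle_{A_n}^{(l)}=K/(\tau\Delta r)$, and then to track how each power of $\Delta r$ is produced. First I would insert the two small-$\Delta r$ expansions (\ref{eq:etabynu}) and (\ref{eq:rho(n+1-omg)}) into the definition (\ref{eq:F}) of $F_n^{(l)}$ and use the $T^1$-expansion (\ref{eq:expansion_f}) of $f$. Because $Pe_g$ enters only through $Pe_gI_\omega$ and the spatial increment enters only through $I_{\omega_0}\Delta r$, every monomial produced in $F_n^{(l)}$ has the form $I_\omega^{a}I_{\omega_0}^{b}$ multiplied by exactly $\Delta r^{a+b}$; this identity between the total $I$-degree $a+b$ and the power of $\Delta r$ is the central structural fact I would record. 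The lowest degree present is $m$, coming from $\hat C(\rho_n)X^m$, and by the Remark on the leading-order terms of $F_n^{(l)}$ the degree-$m$ part of $F_{n+1-\omega}^{(l)}$ is identical to that of $F_n^{(l)}$, so in $\mathcal F_n^{(l)}$ it carries the factor $1+\epsilon(-1)^m$.

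Next I would apply the two summation rules. The inner sum $\sum_{\omega_0}\nu(\omega_0)I_{\omega_0}^{b}=S_\nu(b)$ and the outer sum $\sum_\omega(\eta(\omega)+\delta\nu(\omega))I_\omega^{a}=S_{\eta\nu}(\delta,a)$ of Eqs.~(\ref{eq:S_nu(a)}) and (\ref{eq:S_etanu}), together with Eq.~(\ref{eq:sum_nugA}) showing that the remaining $\overleftarrow{\underline{\omega}}_{k-1}$-sum is $O(1)$, reduce the whole problem to these two trits, exactly as in Lemma~\ref{lem:order_K}. Writing $\sigma(a)=1$ for odd $a$ and $0$ for even $a$, the contribution of a monomial $I_\omega^{a}I_{\omega_0}^{b}$ to $K$ vanishes unless the parity of $a$ matches $\delta$ ($a$ even for $\delta=1$, $a$ odd for $\delta=-1$), and otherwise its order in $\Delta r$ is $1+(a+b)+\sigma(a)+\sigma(b)$, the extra $\sigma$'s being the order increase that $S_\nu$ and $S_{\eta\nu}$ carry on odd indices. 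It then remains to minimise $1+(a+b)+\sigma(a)+\sigma(b)$ over the admissible monomials and to translate the resulting $K$-order into that of $\langle A\rangle_{A_n}^{(l)}=K/(\tau\Delta r)$ using the UOFP convention $\tau\sim\Delta r^{2}$ (equivalently $\beta\sim O(1)$).

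The case split is then mechanical. When $1+\epsilon(-1)^m\neq0$ (that is, $m$ even with $\epsilon=1$, or $m$ odd with $\epsilon=-1$) the minimisation runs over $a+b=m$; when $1+\epsilon(-1)^m=0$ the degree-$m$ part cancels and the minimisation runs over $a+b=m+1$. Carrying out the four subcases indexed by the parity of $m$ and the sign $\delta$ reproduces Eqs.~(\ref{eq:LOE_caseI}) and (\ref{eq:LOE_caseII}). In particular, for $m=0,\ \epsilon=\delta=1$ the minimising monomial is $(a,b)=(0,0)$, for which $S_\nu(0)=1$ supplies no factor $\beta$, giving $K\sim\Delta r$ and hence the purely $\tau$-dependent order $O(\tau^{-1})$; for $m>0$ with the same signs the minimiser forces an even positive index and hence one factor $\beta=2\tau D/\Delta r^{2}$, so $K\sim\beta\Delta r^{m+1}$ and $\beta$ cancels the $\tau$ in $K/(\tau\Delta r)$ to leave $O(\Delta r^{m-2})$.

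The hard part will be the structural claim in the cancellation regime: I must show that the lowest surviving degree is exactly $m+1$ and that the new lowest monomials carry one extra power of $I_\omega$ rather than $I_{\omega_0}$. This follows from recognising that $F_{n+1-\omega}^{(l)}$ is $F_n^{(l)}$ with its base site displaced by $I_\omega=1-\omega$ lattice steps, so that $F_n^{(l)}-F_{n+1-\omega}^{(l)}$ is a discrete $I_\omega\Delta r$-derivative acting on a degree-$m$ object; this injects precisely one factor of $I_\omega$ and one factor of $\Delta r$ while preserving the degree-equals-power bookkeeping. I would also have to check, at the minimising degree in each subcase, that a monomial of the parity demanded by $\delta$ genuinely occurs with nonzero coefficient --- generically guaranteed by the binomial expansion of $X^m$ and non-degeneracy of $\hat C$ --- and separately dispose of the small-$m$ boundary case, namely $m=0$ under cancellation, where degree $m+1$ is parity-forbidden and one must pass to degree $m+2$, which fortuitously yields the same order $O(\Delta r^{m})$.
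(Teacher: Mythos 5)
Your core reduction is the same as the paper's: the monomial bookkeeping $I_{\omega}^{a}I_{\omega_{0}}^{b}\Delta r^{a+b}$, the two-trit sums $S_{\nu}$ and $S_{\eta\nu}$ with their parity annihilation and order increase, and the observation that the $\underline{\overleftarrow{\omega}}_{k-1}$-sum is $O(1)$ are exactly the content of Lemma \ref{lem:order_K}; your case split on $1+\epsilon(-1)^{m}$ together with the parity of the $I_{\omega}$-exponent reproduces the paper's four subcases. Your discrete-derivative argument for the cancellation regime (that $F_{n}^{(l)}-F_{n+1-\omega}^{(l)}$ injects exactly one factor of $I_{\omega}\Delta r$) is a cleaner, conceptual substitute for what the paper does by explicit computation in Lemma \ref{lem:F^(l)_n} and Table \ref{tab:xpq&ypq}, where one sees that the pure-$I_{\omega_{0}}$ coefficients of $F_{n}^{(l)}$ and $F_{n+1-\omega}^{(l)}$ coincide, so every surviving monomial carries at least one $I_{\omega}$.

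There is, however, one genuine gap: you fix $\beta\sim O(1)$ (i.e.\ $\tau\sim\Delta r^{2}$) and then minimize the pure $\Delta r$-exponent $1+(a+b)+\sigma(a)+\sigma(b)$. The UOFP definition only requires the order exponent of $\beta$ to be nonnegative, and the theorem is stated so as to be insensitive to that choice --- this is precisely why the case $m=0$, $\epsilon=\delta=1$ is recorded as $O(\tau^{-1})$ rather than $O(\Delta r^{-2})$: it is the only case whose leading contribution carries no factor of $\beta$. In all other cases the paper verifies that the leading contribution carries exactly one factor of $\beta$ (contributions carrying $\beta^{2}$, e.g.\ monomials with both indices positive and even in Case I-1, are subleading unless $\beta$ is order one), after which the identity $\beta/\tau=2D/\Delta r^{2}$ converts the $K$-order into the stated $\Delta r$-orders independently of how $\tau$ scales. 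Your bookkeeping must therefore count $\beta$-factors as well as $\Delta r$-powers; as written it proves the theorem only for $\tau\sim\Delta r^{2}$ and cannot distinguish the $O(\tau^{-1})$ line from $O(\Delta r^{m-2})$. A second, smaller omission: there are two parity-forbidden boundary cases at $m=0$, not one. Besides the cancellation case $\epsilon=-1$, $\delta=1$ that you treat, the non-cancellation case $\epsilon=1$, $\delta=-1$ also has its only degree-zero monomial $(a,b)=(0,0)$ killed by the $\omega$-sum and requires passing to degree one; the same mechanism again yields $O(\Delta r^{m})=O(1)$, so the conclusion stands, but the case must be disposed of explicitly.
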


\begin{proof}

\subsubsection*{(Case I) }

If \textit{f} is symmetric for the interchange of \textit{x} and \textit{y},
i.e. \textit{m} is an even number, the order of \textit{K} depends
on $\epsilon$ and $\delta$, refined two cases: (I-1) $\epsilon=\delta=1$
and (I-2) elsewhere. For the former case, $1+\epsilon(-1)^{m}\neq0$
and $\delta+1\neq0$. From Lemma \ref{lem:order_K}, therefore, \textit{m}th
order terms in $F_{n}^{(l)}(\omega,\underline{\omega}_{k})$ with
the coefficient of $I_{\omega}^{2a}I_{\omega_{0}}^{2b}$, such that
$2(a+b)=m$ for two integers \textit{a} and \textit{b}, determines
the leading order of \textit{K}. Considering the operations (\ref{eq:S_nu(a)})
and (\ref{eq:S_etanu}), we obtain
\[
K\sim\begin{cases}
\Delta r, & m=0,\\
\Delta r\beta\Delta r^{m}, & m>0.
\end{cases}
\]
For the case of $m\neq0$, \textit{K} can have some terms with the
coefficient of $\beta^{2}$ for the case of $a\neq0$ and $b\neq0$.
Note that, however, such terms cannot be, in general, the leading-order
terms because $\beta$ has the nonnegative order exponent (cf. Def.
\ref{def:UOFP}). They are the leading terms if and only if $\beta$
is order one. That is why \textit{K} can be expressed as above mentioned.

For the latter case (I-2), the \textit{m}th order terms vanish. The
next (\textit{m}+1)th order terms have the coefficient of $I_{\omega}^{a}I_{\omega_{0}}^{b}$
and one of the integer power indices \textit{a} and \textit{b} is
positive odd number, occurring the order increase with the coefficient
$\beta$ (Lemma \ref{lem:order_K}). This results in the same order
as the (\textit{m}+2)th order terms with the coefficient of $I_{\omega}^{2a}I_{\omega_{0}}^{2b}$,
such that $2(a+b)=m+2$. Therefore, 
\[
K\sim\Delta r\beta\Delta r^{m+2}.
\]

The order of averaged quantity $\left\langle A\right\rangle _{A_{n}}^{(l)}$,
Eq. (\ref{eq:<A>^(l)_n}), for the Case I, Eq. (\ref{eq:LOE_caseI}),
is estimated from these results.

\subsubsection*{(Case II)}

If \textit{f} is antisymmetric, i.e. \textit{m} is an odd number,
the order of \textit{K} also depends on $\epsilon$ and $\delta$,
refined two cases: (II-1) $\epsilon=-1\textrm{ or }\delta=1$ and
(II-2) elsewhere. For the former case, $1+\epsilon(-1)^{m}\neq0$
or $\delta+1\neq0$. When $1+\epsilon(-1)^{m}\neq0$, the \textit{m}th
order terms in $F_{n}^{(l)}(\omega,\underline{\omega}_{k})$ remains.
But the order increases by one through the operations (\ref{eq:S_nu(a)})
and (\ref{eq:S_etanu}). The leading terms have the coefficient of
$\beta$ because \textit{m} is positive. When $1+\epsilon(-1)^{m}=0$
and $\delta+1\neq0$, the \textit{m}th order terms vanish. But the
next (\textit{m}+1)th order terms have the coefficient of $I_{\omega}^{2a}I_{\omega_{0}}^{2b}$,
such that $2(a+b)=m+1$, remaining as the (\textit{m}+1)th order terms
with the coefficient $\beta(\delta+1)$. As a result, 
\[
K\sim\Delta r\beta\Delta r^{m+1}.
\]

For the latter case (II-2), the (\textit{m}+1)th order terms vanish.
The next (\textit{m}+2)th order terms have the coefficient of $I_{\omega}^{a}I_{\omega_{0}}^{b}$
and one of the integer power indices \textit{a} and \textit{b} is
positive odd number, occurring again the order increase with the coefficient
$\beta$. This results in the same order as the (\textit{m}+3)th order
terms with the coefficient of $I_{\omega}^{2a}I_{\omega_{0}}^{2b}$,
such that $2(a+b)=m+3$. For the case, therefore, 
\[
K\sim\Delta r\beta\Delta r^{m+3}.
\]

As a result, the order of averaged quantity $\left\langle A\right\rangle _{A_{n}}^{(l)}$
for the Case II is estimated as Eq. (\ref{eq:LOE_caseII}).\end{proof}

\begin{corollary}If the order of $\tau$ is $O(\Delta r^{2})$ and
the symmetric part $\hat{C}(Y)\,(Y\equiv(\rho_{j}^{(l)}+\rho_{i}^{(l)})/2)$
is restricted to the product of the power of two symmetric order components
$Y$ and $\ln Y$, the estimated orders (\ref{eq:LOE}) is the statement
of the LOE conjectured by Ishida \cite{ishida2012}. The LOE proved
here is an extension of the empirical LOE. \end{corollary}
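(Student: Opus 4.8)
The plan is to read the corollary as a specialization of Theorem~\ref{theorem:LOE} followed by a term-by-term comparison with the empirical table of Ref.~\onlinecite{ishida2012}, handling its two hypotheses separately. First I would impose $\tau=O(\Delta r^{2})$; by $\beta\equiv 2\tau D/\Delta r^{2}$ this is equivalent to $\beta=O(\Delta r^{0})$, so $\beta$ is order one. Inspecting (\ref{eq:LOE}) one sees that $\tau$ in fact cancels in every branch except $m=0$, $\epsilon=\delta=1$: writing $\langle A\rangle_{A_n}^{(l)}=K/(\tau\Delta r)$ with $K\sim\Delta r\,\beta\,\Delta r^{m}$ gives $\langle A\rangle_{A_n}^{(l)}\sim(\beta/\tau)\Delta r^{m}=(2D/\Delta r^{2})\Delta r^{m}$, which is $\tau$-independent, and likewise for Case~II. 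The sole surviving $\tau$ is the $O(\tau^{-1})$ of the $m=0$, $\epsilon=\delta=1$ subcase, where $K\sim\Delta r$ carries no factor of $\beta$. Substituting $\tau^{-1}=O(\Delta r^{-2})$ turns this into $O(\Delta r^{-2})$, which is precisely the $m=0$ instance of $O(\Delta r^{m-2})$, so it merges with the $m>0$ branch. Hence under $\tau=O(\Delta r^{2})$ the whole of Case~I reduces to the single rule $O(\Delta r^{m-2})$ for $\epsilon=\delta=1$ and $O(\Delta r^{m})$ otherwise, i.e. the purely $\Delta r$-graded form in which the empirical LOE was phrased.

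Next I would address the restriction $\hat{C}(Y)=Y^{p}(\ln Y)^{q}$. Here I would verify, from the defining limit (\ref{eq:defChat}) and the expansion (\ref{eq:expansion_f}), that each such $f$ is genuinely $T^{1}$-type with a well-defined constant characteristic exponent $m$, and that this $m$ is exactly the order label attached to the corresponding row of Table~1 in Ref.~\onlinecite{ishida2012}. With that identification fixed, every combination of the parities and signs $(m\bmod 2,\epsilon,\delta)$ appearing in (\ref{eq:LOE_caseI})--(\ref{eq:LOE_caseII}) can be matched against the associated empirical entry, which establishes the claimed equivalence of the two statements. I would then record that Theorem~\ref{theorem:LOE} is a strict extension on two counts: it survives for an arbitrary characteristic function $\hat{C}$, not only products of powers of $Y$ and $\ln Y$, and for any $\tau$ of order at least $\Delta r^{2}$, not only $\tau=O(\Delta r^{2})$.

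It is worth confirming along the way that setting $\beta=O(1)$ does not disturb the order estimates. In the proof of Theorem~\ref{theorem:LOE}, which rests on Lemma~\ref{lem:order_K}, the $\beta^{2}$ contributions (those with both power indices $a,b$ nonzero) were dropped as subleading precisely because $\beta$ carries a nonnegative order exponent. When $\beta=O(1)$ these terms cease to be subleading, but $\beta^{2}\Delta r^{m}$ and $\beta\,\Delta r^{m}$ share the same $\Delta r$-exponent, so they enrich the leading coefficient without altering the order; the estimates (\ref{eq:LOE}) therefore stand unchanged.

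The main obstacle I anticipate is not the $\tau$-substitution, which is mechanical, but the faithful dictionary between the single-parameter description here (the characteristic exponent $m$) and the two-parameter bookkeeping (independent powers of $Y$ and $\ln Y$) of the empirical conjecture. I would have to confirm that the empirical order exponents were tabulated as functions of $m$, $\epsilon$ and $\delta$ alone, with no separate dependence on $p$ and $q$, so that the collapse onto a single $m$ is legitimate uniformly across all entries; any residual $p$- or $q$-dependence in the empirical statement would need to be shown to drop out at leading order, which is where I expect the comparison to require the most care.
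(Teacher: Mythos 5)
Your proposal is correct and matches what the paper does: the corollary is stated without proof, as an immediate specialization of Theorem~\ref{theorem:LOE}, and your verification spells out exactly that specialization --- observing that $\beta/\tau=2D/\Delta r^{2}$ makes every branch of Eq.~(\ref{eq:LOE}) $\tau$-independent except the $m=0$, $\epsilon=\delta=1$ case, which under $\tau=O(\Delta r^{2})$ merges into $O(\Delta r^{m-2})$, and that setting $\beta=O(1)$ only promotes the $\beta^{2}$ terms to the same (not lower) $\Delta r$-order, leaving the estimates intact. Your remaining step, checking that the empirical orders in Ref.~\onlinecite{ishida2012} depend only on $(m,\epsilon,\delta)$ and not separately on the powers of $Y$ and $\ln Y$, is precisely the dictionary the paper takes for granted when it calls the restricted statement ``the LOE conjectured by Ishida.''
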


\begin{remark}The resultant order exponent is always even number
with an exception for the case that $m=0$ and $\epsilon=\delta=1$.
Therefore, the vanishing-order terms are more than or equal to the
second order, and the property plays a role in well separating convergent
or macroscopic $O(1)$ terms from them. Such a behavior has been numerically
confirmed in residual terms, such as residual entropy source (see
below), residual entropy flux \cite{ishida1,ishida2012}.\end{remark}

\begin{example}For the case of the quantity (\ref{eq:exam1}), characteristic
exponent \textit{m}=0 but $\epsilon=\delta=-1$ and, therefore, its
averaged quantity is $O(1)$. This is a component of the macroscopic
density change \cite{ishida2012}.\end{example}

\begin{example}The averaged quantity of the residual entropy source
(Example \ref{exm:rs}) vanishes in the continuum limit because its
order is estimated to be $O(\Delta r^{2})$. This is responsible for
the positive entropy production in the volume-preserving system, i.e.
for the law of increasing entropy\cite{ishida1,ishida2012}. \end{example}

\section{Leading-order terms of an averaged quantity\label{sec:leading-term}}

The proof of the LOE, mentioned above, can be regarded as the introduction
for deriving the form of the leading-order terms in the bulk system.
Successively, we shall obtain the specific form of the leading terms
of the averaged value $\left\langle A\right\rangle _{A_{n}}^{(l)}$
of a $T^{1}$-type local quantity $A^{(l)}(f,m,\hat{C})$ for a UOFP-type
multibaker chain.

First of all, we shall define some variables and abridged notations
as follows:
\begin{eqnarray*}
x_{1} & \text{\ensuremath{\equiv}} & \frac{\eta(\omega)}{\nu(\omega)}\rho^{(l-1)}(n,\stackrel{\leftarrow}{\underline{\omega}}_{k-1}),\\
x_{2} & \equiv & \frac{\eta(\omega)}{\nu(\omega)}\rho^{(l-1)}(n+1-\omega,\stackrel{\leftarrow}{\underline{\omega}}_{k-1}),\\
y_{1} & \text{\ensuremath{\equiv}} & \frac{\eta(\omega_{0})}{\nu(\omega_{0})}\rho^{(l-1)}(n+1-\omega_{0},\stackrel{\leftarrow}{\underline{\omega}}_{k-1}),\\
y_{2} & \equiv & \frac{\eta(\omega_{0})}{\nu(\omega_{0})}\rho^{(l-1)}(n+1-\omega_{0}+1-\omega,\stackrel{\leftarrow}{\underline{\omega}}_{k-1}),
\end{eqnarray*}
and

\[
X_{i}\equiv\frac{x_{i}-y_{i}}{2},\, Y_{i}\equiv\frac{x_{i}+y_{i}}{2},\,(i=1\textrm{ or }2).
\]
In this section, we shall also utilize the abridged notation for the
quantity $\overleftarrow{A}_{n}^{(l)}\equiv A_{n}^{(l-1)}(\underline{\overleftarrow{\omega}}_{k-1})$,
defined in Eq. (\ref{eq:defA^(l)_n}). For example, $\rho^{(l-1)}(n,\stackrel{\leftarrow}{\underline{\omega}}_{k-1})$,
$\left(\partial\rho/\partial r\right)^{(l-1)}(n,\stackrel{\leftarrow}{\underline{\omega}}_{k-1})$
and $\left(\partial^{2}\rho/\partial r^{2}\right)^{(l-1)}(n,\stackrel{\leftarrow}{\underline{\omega}}_{k-1})$
are abridged to $\overleftarrow{\rho}_{n}^{(l)}$, $\overleftarrow{\left(\rho_{r}\right)}_{n}^{(l)}$
and $\overleftarrow{\left(\rho_{rr}\right)}_{n}^{(l)}$, respectively.
Then $X_{i}$ and $Y_{i}$ can be written in the following power series:\begin{subequations}\label{eq:X&Y}
\begin{eqnarray}
X_{i} & = & \sum_{p,q\geq0}(-1)^{p+q}\overleftarrow{X}_{p,q}^{(i)}\left(\frac{I_{\omega}\Delta r}{2}\right)^{p}\left(\frac{I_{\omega_{0}}\Delta r}{2}\right)^{q}\nonumber \\
 & = & \sum_{p\geq0}\overleftarrow{X}_{p}^{(i)}(\omega,\omega_{0})(\Delta r/2)^{p},\label{eq:Xseries}\\
Y_{i} & = & \sum_{p,q\geq0}(-1)^{p+q}\overleftarrow{Y}_{p,q}^{(i)}\left(\frac{I_{\omega}\Delta r}{2}\right)^{p}\left(\frac{I_{\omega_{0}}\Delta r}{2}\right)^{q}\nonumber \\
 & = & \sum_{p\geq0}\overleftarrow{Y}_{p}^{(i)}(\omega,\omega_{0})(\Delta r/2)^{p},\label{eq:Yseries}
\end{eqnarray}
where
\[
\overleftarrow{X}_{p}^{(i)}(\omega,\omega_{0})\equiv(-1)^{p}\sum_{q=0}^{p}\overleftarrow{X}_{p-q,q}^{(i)}I_{\omega}^{p-q}I_{\omega_{0}}^{q},
\]
\end{subequations}and $\overleftarrow{Y}_{p}^{(i)}(\omega,\omega_{0})$
is similarly defined. We can determine the coefficients $\overleftarrow{X}_{p,q}^{(i)}$
and $\overleftarrow{Y}_{p,q}^{(i)}$ from Eqs. (\ref{eq:etabynu})
and (\ref{eq:rho(n+1-omg)}), and typical ones are shown in Table.
\ref{tab:xpq&ypq}. From the table, we can find that $\overleftarrow{X}_{1}^{(i)}(\omega,\omega_{0})$
is independent of \textit{i}, and hereafter the superscript $(i)$
is omitted. Furthermore, $\hat{C}$ is regarded as the function of
$\rho$ because $Y_{i}$ converges to $\rho_{n}^{(l)}$ in the continuum
limit, $\Delta r\rightarrow0$, and, for example, we shall denote
$\left.(\partial\hat{C}/\partial\rho)\right|_{\rho=\rho_{n}^{(l)}}$
by $\hat{C}_{\rho}(\rho_{n}^{(l)})$. 
\begin{table}
\caption{Expansion coefficients $\protect\overleftarrow{X}_{p,q}^{(i)}$ and
$\protect\overleftarrow{Y}_{p,q}^{(i)}$ }
\label{tab:xpq&ypq}%
\begin{tabular}{ccccc}
\hline 
(\textit{p},\textit{q}) & $\overleftarrow{X}_{p,q}^{(1)}$ & $\overleftarrow{X}_{p,q}^{(2)}$ & $\overleftarrow{Y}_{p,q}^{(1)}$ & $\overleftarrow{Y}_{p,q}^{(2)}$\tabularnewline
\hline 
(0,0) & 0 & 0 & $\overleftarrow{\rho}_{n}^{(l)}$ & $\overleftarrow{\rho}_{n}^{(l)}$\tabularnewline
\hline 
(1,0) & $\frac{v}{D}\overleftarrow{\rho}_{n}^{(l)}$ & $\overleftarrow{X}_{1,0}^{(1)}$ & $\overleftarrow{X}_{1,0}^{(1)}$ & $\frac{v}{D}\overleftarrow{\rho}_{n}^{(l)}$\tabularnewline
 &  &  &  & $-2\overleftarrow{\left(\rho_{r}\right)}_{n}^{(l)}$\tabularnewline
\hline 
(0,1) & $-\frac{v}{D}\overleftarrow{\rho}_{n}^{(l)}$ & $\overleftarrow{X}_{0,1}^{(1)}$ & $-\overleftarrow{X}_{0,1}^{(1)}$ & $\overleftarrow{Y}_{0,1}^{(1)}$\tabularnewline
 & $+\overleftarrow{\left(\rho_{r}\right)}_{n}^{(l)}$ &  &  & \tabularnewline
\hline 
(2,0) & $\left(\frac{v}{D}\right)^{2}\overleftarrow{\rho}_{n}^{(l)}$ & $\left(\frac{v}{D}\right)^{2}\overleftarrow{\rho}_{n}^{(l)}$ & $\overleftarrow{X}_{2,0}^{(1)}$ & $\left(\frac{v}{D}\right)^{2}\overleftarrow{\rho}_{n}^{(l)}$\tabularnewline
 &  & $-\frac{2v}{D}\overleftarrow{\left(\rho_{r}\right)}_{n}^{(l)}$ &  & $-\frac{2v}{D}\overleftarrow{\left(\rho_{r}\right)}_{n}^{(l)}$\tabularnewline
 &  &  &  & $+2\overleftarrow{\left(\rho_{rr}\right)}_{n}^{(l)}$\tabularnewline
\hline 
(1,1) & 0 & $\frac{2v}{D}\overleftarrow{\left(\rho_{r}\right)}_{n}^{(l)}$ & 0 & $-\overleftarrow{X}_{1,1}^{(2)}$\tabularnewline
 &  & $-2\overleftarrow{\left(\rho_{rr}\right)}_{n}^{(l)}$ &  & \tabularnewline
\hline 
(0,2) & $-\left(\frac{v}{D}\right)^{2}\overleftarrow{\rho}_{n}^{(l)}$ & $\overleftarrow{X}_{0,2}^{(1)}$ & $-\overleftarrow{X}_{0,2}^{(1)}$ & $\overleftarrow{Y}_{0,2}^{(1)}$\tabularnewline
 & $+\frac{2v}{D}\overleftarrow{\left(\rho_{r}\right)}_{n}^{(l)}$ &  &  & \tabularnewline
 & $-\overleftarrow{\left(\rho_{rr}\right)}_{n}^{(l)}$ &  &  & \tabularnewline
\hline 
(3,0) & $\left(\frac{v}{D}\right)^{3}\overleftarrow{\rho}_{n}^{(l)}$ & $\left(\frac{v}{D}\right)^{3}\overleftarrow{\rho}_{n}^{(l)}$ & $\overleftarrow{X}_{3,0}^{(1)}$ & $\left(\frac{v}{D}\right)^{3}\overleftarrow{\rho}_{n}^{(l)}$\tabularnewline
 &  & $-2\left(\frac{v}{D}\right)^{2}\overleftarrow{\left(\rho_{r}\right)}_{n}^{(l)}$ &  & $-2\left(\frac{v}{D}\right)^{2}\overleftarrow{\left(\rho_{r}\right)}_{n}^{(l)}$\tabularnewline
 &  & $+\frac{2v}{D}\overleftarrow{\left(\rho_{rr}\right)}_{n}^{(l)}$ &  & $+\frac{2v}{D}\overleftarrow{\left(\rho_{rr}\right)}_{n}^{(l)}$\tabularnewline
 &  &  &  & $-\frac{4}{3}\overleftarrow{\left(\rho_{rrr}\right)}_{n}^{(l)}$\tabularnewline
\hline 
(2,1) & 0 & $-\frac{2v}{D}\overleftarrow{\left(\rho_{rr}\right)}_{n}^{(l)}$ & 0 & $-\overleftarrow{X}_{2,1}^{(2)}$\tabularnewline
 &  & $+2\overleftarrow{\left(\rho_{rrr}\right)}_{n}^{(l)}$ &  & \tabularnewline
\hline 
(1,2) & 0 & $2\left(\frac{v}{D}\right)^{2}\overleftarrow{\left(\rho_{r}\right)}_{n}^{(l)}$ & 0 & $-\overleftarrow{X}_{1,2}^{(2)}$\tabularnewline
 &  & $-\frac{4v}{D}\overleftarrow{\left(\rho_{rr}\right)}_{n}^{(l)}$ &  & \tabularnewline
 &  & $+2\overleftarrow{\left(\rho_{rrr}\right)}_{n}^{(l)}$ &  & \tabularnewline
\hline 
(0,3) & $-\left(\frac{v}{D}\right)^{3}\overleftarrow{\rho}_{n}^{(l)}$ & $\overleftarrow{X}_{0,3}^{(1)}$ & $-\overleftarrow{X}_{0,3}^{(1)}$ & $\overleftarrow{Y}_{0,3}^{(1)}$\tabularnewline
 & $+2\left(\frac{v}{D}\right)^{2}\overleftarrow{\left(\rho_{r}\right)}_{n}^{(l)}$ &  &  & \tabularnewline
 & $-\frac{2v}{D}\overleftarrow{\left(\rho_{rr}\right)}_{n}^{(l)}$ &  &  & \tabularnewline
 & $+\frac{2}{3}\overleftarrow{\left(\rho_{rrr}\right)}_{n}^{(l)}$ &  &  & \tabularnewline
\hline 
\end{tabular}
\end{table}

The following lemma is crucial for obtaining the leading-order terms
of $\left\langle A\right\rangle _{A_{n}}^{(l)}$.

\begin{lemma}\label{lem:F^(l)_n} $\mathsf{\mathbb{\mathfrak{\mathcal{F}}}}_{n}^{(l)}$,
Eq. (\ref{eq:defcalF^(l)_n}), has the following leading terms with
respect to $\Delta r$. This is the necessary and sufficient expansion
form in order to find the leading-order terms of $\left\langle A\right\rangle _{A_{n}}^{(l)}$.

\begin{multline}
\mathsf{\mathbb{\mathfrak{\mathcal{F}}}}_{n}^{(l)}(\epsilon,m,\omega,\underline{\omega}_{k})=(1+\epsilon(-1)^{m})\overleftarrow{X}_{1}(\omega,\omega_{0})^{m}\hat{C}(\overleftarrow{\rho}_{n}^{(l)})\left(\frac{\Delta r}{2}\right)^{m}\\
+\left[\overleftarrow{X}_{1}(\omega,\omega_{0})^{m}\hat{C}_{\rho}(\overleftarrow{\rho}_{n}^{(l)})\left(\overleftarrow{Y}_{1}^{(1)}(\omega,\omega_{0})+\epsilon(-1)^{m}\overleftarrow{Y}_{1}^{(2)}(\omega,\omega_{0})\right)\right.\\
\shoveleft+\left.m\overleftarrow{X}_{1}(\omega,\omega_{0})^{m-1}\hat{C}(\overleftarrow{\rho}_{n}^{(l)})\left(\overleftarrow{X}_{2}^{(1)}(\omega,\omega_{0})+\epsilon(-1)^{m}\overleftarrow{X}_{2}^{(2)}(\omega,\omega_{0})\right)\right]\left(\frac{\Delta r}{2}\right)^{m+1}\\
\shoveleft+\left\{ \overleftarrow{X}_{1}(\omega,\omega_{0})^{m}\left[\hat{C}_{\rho}(\overleftarrow{\rho}_{n}^{(l)})\left(\overleftarrow{Y}_{2}^{(1)}(\omega,\omega_{0})+\epsilon(-1)^{m}\overleftarrow{Y}_{2}^{(2)}(\omega,\omega_{0})\right)\right.\right.\\
\left.+\frac{1}{2}\hat{C}_{\rho\rho}(\overleftarrow{\rho}_{n}^{(l)})\left(\overleftarrow{Y}_{1}^{(1)}(\omega,\omega_{0})^{2}+\epsilon(-1)^{m}\overleftarrow{Y}_{1}^{(2)}(\omega,\omega_{0})^{2}\right)\right]\\
m\overleftarrow{X}_{1}(\omega,\omega_{0})^{m-1}\left[\hat{C}_{\rho}(\overleftarrow{\rho}_{n}^{(l)})\left(\overleftarrow{X}_{2}^{(1)}(\omega,\omega_{0})\overleftarrow{Y}_{1}^{(1)}(\omega,\omega_{0})\right.\right.\\
\left.+\epsilon(-1)^{m}\overleftarrow{X}_{2}^{(2)}(\omega,\omega_{0})\overleftarrow{Y}_{1}^{(2)}(\omega,\omega_{0})\right)\\
\left.\hat{C}(\overleftarrow{\rho}_{n}^{(l)})\left(\overleftarrow{X}_{3}^{(1)}(\omega,\omega_{0})+\epsilon(-1)^{m}\overleftarrow{X}_{3}^{(2)}(\omega,\omega_{0})\right)\right]\\
+\binom{m}{2}\overleftarrow{X}_{1}(\omega,\omega_{0})^{m-2}\hat{C}(\overleftarrow{\rho}_{n}^{(l)})\left(\overleftarrow{X}_{2}^{(1)}(\omega,\omega_{0})^{2}\right.\\
\left.\left.+\epsilon(-1)^{m}\overleftarrow{X}_{2}^{(2)}(\omega,\omega_{0})^{2}\right)\right\} \left(\frac{\Delta r}{2}\right)^{m+2}+O(\Delta r^{m+3}).\label{eq:calF^(l)_n}
\end{multline}

\end{lemma}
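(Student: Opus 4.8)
The plan is to read off (\ref{eq:calF^(l)_n}) by composing the two expansions already prepared in this section. First I would write $F_{n}^{(l)}=\tilde{f}(X_{1},Y_{1})$ and $F_{n+1-\omega}^{(l)}=\tilde{f}(X_{2},Y_{2})$ from the definitions of $x_{i},y_{i}$, and insert the $X$-expansion (\ref{eq:expansion_f}) furnished by the characteristic data $(m,\hat{C})$. Because Table~\ref{tab:xpq&ypq} gives $\overleftarrow{X}_{0}^{(i)}=0$ and $\overleftarrow{Y}_{0}^{(i)}=\overleftarrow{\rho}_{n}^{(l)}$, the series (\ref{eq:Xseries})--(\ref{eq:Yseries}) show $X_{i}=O(\Delta r)$ and $Y_{i}=\overleftarrow{\rho}_{n}^{(l)}+O(\Delta r)$; hence $\tilde{f}(X_{i},Y_{i})=\hat{C}(Y_{i})X_{i}^{m}+O(\Delta r^{m+2})$ begins at $O(\Delta r^{m})$, so it suffices to expand the single monomial $\hat{C}(Y_{i})X_{i}^{m}$ two further orders in $\Delta r$.

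Next I would perform the two elementary expansions in $u\equiv\Delta r/2$. Taylor expanding $\hat{C}(Y_{i})$ about $Y_{i}=\overleftarrow{\rho}_{n}^{(l)}$ yields the coefficients $\hat{C}$, $\hat{C}_{\rho}$, $\hat{C}_{\rho\rho}$ multiplying $1$, $\overleftarrow{Y}_{1}^{(i)}u+\overleftarrow{Y}_{2}^{(i)}u^{2}$, and $\tfrac{1}{2}(\overleftarrow{Y}_{1}^{(i)})^{2}u^{2}$; the binomial expansion $X_{i}^{m}=u^{m}(\overleftarrow{X}_{1}+\overleftarrow{X}_{2}^{(i)}u+\overleftarrow{X}_{3}^{(i)}u^{2}+\cdots)^{m}$ yields $\overleftarrow{X}_{1}^{m}$, $m\overleftarrow{X}_{1}^{m-1}\overleftarrow{X}_{2}^{(i)}u$, and $(m\overleftarrow{X}_{1}^{m-1}\overleftarrow{X}_{3}^{(i)}+\binom{m}{2}\overleftarrow{X}_{1}^{m-2}(\overleftarrow{X}_{2}^{(i)})^{2})u^{2}$. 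Multiplying and collecting the powers $u^{m},u^{m+1},u^{m+2}$ gives the per-copy coefficients, and forming $\mathcal{F}_{n}^{(l)}=F_{n}^{(l)}+\epsilon(-1)^{m}F_{n+1-\omega}^{(l)}$ adds copy $i=1$ to $\epsilon(-1)^{m}$ times copy $i=2$. The structural fact that makes the result collapse to the stated form is that $\overleftarrow{X}_{1}^{(1)}=\overleftarrow{X}_{1}^{(2)}=\overleftarrow{X}_{1}(\omega,\omega_{0})$ (Table~\ref{tab:xpq&ypq}): every term built only from $\overleftarrow{X}_{1}$ and $\hat{C}(\overleftarrow{\rho}_{n}^{(l)})$ inherits the common prefactor $1+\epsilon(-1)^{m}$, while the genuinely $i$-dependent coefficients $\overleftarrow{X}_{2}^{(i)},\overleftarrow{X}_{3}^{(i)},\overleftarrow{Y}_{1}^{(i)},\overleftarrow{Y}_{2}^{(i)}$ persist as the paired sums $(\cdot)^{(1)}+\epsilon(-1)^{m}(\cdot)^{(2)}$ displayed in (\ref{eq:calF^(l)_n}).

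The part I expect to be the real obstacle is justifying the words ``necessary and sufficient'': the displayed form must capture exactly the leading order of $\langle A\rangle_{A_{n}}^{(l)}$, not merely agree with a Taylor remainder. The subtlety is that (\ref{eq:expansion_f}) also contributes a bona fide order-$\Delta r^{m+2}$ piece $(1+\epsilon(-1)^{m})\,C^{(m+2)}(\overleftarrow{\rho}_{n}^{(l)})\,\overleftarrow{X}_{1}^{m+2}$, which is deliberately absent from (\ref{eq:calF^(l)_n}); I must show it never survives to the leading order of $K$. This is done by feeding each order of $\mathcal{F}_{n}^{(l)}$ into the summation of Lemma~\ref{lem:order_K}. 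The homogeneity of $\overleftarrow{X}_{p},\overleftarrow{Y}_{p}$ makes the bit-degree $a+b$ of any monomial $I_{\omega}^{a}I_{\omega_{0}}^{b}$ equal to its order in $\Delta r$, and the operations (\ref{eq:S_nu(a)})--(\ref{eq:S_etanu}) raise that order by $0$, $1$, or $2$ according as both of $a,b$ are even, exactly one is odd, or both are odd, while $\eta(\omega)+\delta\nu(\omega)$ annihilates every even-$a$ monomial when $\delta=-1$. Running these rules through the four parity/sign cases shows that whenever the omitted $C^{(m+2)}$ piece (or any order-$\ge m+3$ term) could reach the leading order it is killed first, either by $1+\epsilon(-1)^{m}=0$ when $\epsilon\ne(-1)^{m}$ or by the $\delta=-1$ cancellation of its even monomials; conversely, in the degenerate cases (I-2) and (II-2) the order-$(m+1)$ and order-$(m+2)$ blocks of (\ref{eq:calF^(l)_n}) are precisely the pieces promoted to the leading $K\sim\Delta r\,\beta\,\Delta r^{m+2}$ and $\Delta r\,\beta\,\Delta r^{m+3}$. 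Tracking this bookkeeping so that no surviving leading-order monomial is dropped in any case is where the care is needed.
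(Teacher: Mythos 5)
Your proposal is correct and follows essentially the same route as the paper: replace $f$ by $\hat{C}(Y)X^{m}$, substitute the series (\ref{eq:X&Y}) and collect orders $m$ through $m+2$, then justify necessity and sufficiency by feeding the omitted pieces ($C^{(m+2)}(Y)X^{m+2}$ and the $O(\Delta r^{m+3})$ remainder) through the operations (\ref{eq:S_nu(a)})--(\ref{eq:S_etanu}), with the $(\delta+1)$ cancellation disposing of the critical case (II-2). The only difference is presentational: where you run the kill/promote rules through the parity and sign cases, the paper argues once and structurally that the leading contributions of $\hat{C}(Y)X^{m}$ and $C^{(m+2a)}(Y)X^{m+2a}$ always stay exactly two orders apart because $1+\epsilon(-1)^{m}=1+\epsilon(-1)^{m+2a}$ and both share the same $\epsilon,\delta$ under the summation operations.
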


\begin{proof}First of all, we should notice that higher-order terms
in Eq. (\ref{eq:expansion_f}), i.e. $C^{(m+2)}(Y)X^{m+2}$, $C^{(m+4)}(Y)X^{m+4}$,$\cdots$,
does not affect the leading-order terms of $\left\langle A\right\rangle _{A_{n}}^{(l)}$.
From the term $C^{(m+2)}(Y)X^{m+2}$, for example, (\textit{m}+2)th
order terms appears. If the leading \textit{m}th order terms, originated
from the term $\hat{C}(Y)X^{m}$, vanish, the (\textit{m}+2)th order
terms also do because 
\[
1+\epsilon(-1)^{m}=1+\epsilon(-1)^{m+2}(=1+\epsilon(-1)^{m+4}=\cdots).
\]
Similarly, when the leading-order terms of $\hat{C}(Y)X^{m}$ increase
its order through the operations (\ref{eq:S_nu(a)}) and (\ref{eq:S_etanu}),
those of $C^{(m+2)}(Y)X^{m+2}$ also do because the same parameters
of $\epsilon$ and $\delta$ are shared. Therefore, the difference
of order exponents between the two leading-order terms originated
from $\hat{C}(Y)X^{m}$ and $C^{(m+2)}(Y)X^{m+2}$ is always kept
constant of two. Similar discussion can be applied to the leading
terms originated from the order-disjoint two terms $C^{(m+2a)}(Y)X^{m+2a}$
and $C^{(m+2a+2)}(Y)X^{m+2a+2}$. Consequently, the order exponents
of these leading terms are equally spaced, with an interval of two.
That is why the function \textit{f} can be replaced by $\hat{C}(Y)X^{m}$
in order to obtain the specific leading terms of $\left\langle A\right\rangle _{A_{n}}^{(l)}$.
Replacing \textit{f} with $\hat{C}(Y)X^{m}$ and substituting Eq.
(\ref{eq:X&Y}) into $\mathcal{F}_{n}^{(l)}$, Eq. (\ref{eq:defcalF^(l)_n}),
yields Eq. (\ref{eq:calF^(l)_n}). 

As described above we should treat (\textit{m}+3)th order terms for
the case (II-2), which is the same order as the higher-order term
$O(\Delta r^{m+3})$ in Eq. (\ref{eq:calF^(l)_n}). In this case,
however, the operations (\ref{eq:S_nu(a)}) and (\ref{eq:S_etanu})
on the higher-order term make the same order terms vanish because
of the coefficient $(\delta+1)$. That is why the expansion (\ref{eq:calF^(l)_n})
is necessary and sufficient to find the leading-order terms of \textit{K}
or $\left\langle A\right\rangle _{A_{n}}^{(l)}$. \end{proof}

\begin{corollary}The LOE (Theorem \ref{theorem:LOE} or Eq. (\ref{eq:LOE}))
immediately follows from the Lemma \ref{lem:F^(l)_n} (or Eq. (\ref{eq:calF^(l)_n}))
and properties (\ref{eq:S_nu(a)}), (\ref{eq:S_etanu}) and (\ref{eq:sum_nugA}).\end{corollary}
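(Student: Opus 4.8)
The plan is to feed the explicit leading expansion (\ref{eq:calF^(l)_n}) of $\mathsf{\mathbb{\mathfrak{\mathcal{F}}}}_{n}^{(l)}$ directly into the formula (\ref{eq:K}) for $K$ and let the summation operations do the order counting. First I would split the inner sum as $\sum_{\underline{\omega}_{k}}=\sum_{\omega_{0}}\sum_{\stackrel{\leftarrow}{\underline{\omega}}_{k-1}}$ with $\nu(\underline{\omega}_{k})=\nu(\omega_{0})\,\nu(\stackrel{\leftarrow}{\underline{\omega}}_{k-1})$, so that (\ref{eq:K}) becomes a triple sum over $\omega$, $\omega_{0}$ and $\stackrel{\leftarrow}{\underline{\omega}}_{k-1}$. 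Every coefficient appearing in (\ref{eq:calF^(l)_n})---namely $\hat{C}(\overleftarrow{\rho}_{n}^{(l)})$, $\hat{C}_{\rho}(\overleftarrow{\rho}_{n}^{(l)})$, and the $\overleftarrow{X}^{(i)}_{p,q}$, $\overleftarrow{Y}^{(i)}_{p,q}$ built from $\overleftarrow{\rho}_{n}^{(l)}, \overleftarrow{(\rho_{r})}_{n}^{(l)},\dots$---is a function of the site-level quantities that obey the recursion (\ref{eq:defA^(l)_n}); hence the $\stackrel{\leftarrow}{\underline{\omega}}_{k-1}$-sum is disposed of once and for all by (\ref{eq:sum_nugA}), which returns the site value plus an $O(\Delta r)$ correction and is therefore order-neutral. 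What remains is to collapse the two outer sums over $\omega$ and $\omega_{0}$.

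The essential observation is that, because each power of $\Delta r$ in (\ref{eq:sum-operations}) is carried by a factor $I_{\omega}$ or $I_{\omega_{0}}$, the coefficient of $(\Delta r/2)^{p}$ in (\ref{eq:calF^(l)_n}) is a homogeneous polynomial of degree $p$ in $(I_{\omega},I_{\omega_{0}})$. Thus every monomial has the form $I_{\omega}^{a}I_{\omega_{0}}^{b}$ with $a+b=p$, and summing it against $(\eta(\omega)+\delta\nu(\omega))$ and $\nu(\omega_{0})$ produces exactly $S_{\eta\nu}(\delta,a)\,S_{\nu}(b)$ times a site-level factor. Now I would invoke two rules read off from (\ref{eq:S_nu(a)}) and (\ref{eq:S_etanu}): the parity-selection rule, that $S_{\eta\nu}(\delta,a)=(\delta+(-1)^{a})S_{\nu}(a)$ vanishes unless the parity of $a$ matches $\delta$ (even $a$ for $\delta=1$, odd $a$ for $\delta=-1$); and the order-increase rule, that switching an argument of $S_{\nu}$ from even to odd inserts a factor $Pe_{g}=O(\Delta r)$ and raises the $\Delta r$-order by one. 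Combined with the structural prefactor $1+\epsilon(-1)^{m}$, which annihilates the whole $p=m$ block precisely when $\epsilon(-1)^{m}=-1$, these rules determine, in each of the four cases of Theorem \ref{theorem:LOE}, the smallest $p$ together with the admissible parity split $(a,b)$ that leaves a nonzero contribution---reproducing the very case analysis already carried out there. Reading off the minimal surviving monomial gives $K\sim\Delta r$ for $m=0,\epsilon=\delta=1$; $K\sim\beta\,\Delta r^{m+1}$ for $m>0,\epsilon=\delta=1$; $K\sim\beta\,\Delta r^{m+3}$ in the even-$m$ ``elsewhere'' case; and $K\sim\beta\,\Delta r^{m+2}$ or $\beta\,\Delta r^{m+4}$ in the two odd-$m$ subcases.

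Finally I would convert these $K$-orders into orders of $\left\langle A\right\rangle_{A_{n}}^{(l)}=K/(\tau\Delta r)$ via (\ref{eq:<A>^(l)_n}), using $\beta=2\tau D/\Delta r^{2}$ from (\ref{eq:eta-nu}) so that $\beta/\tau=2D/\Delta r^{2}$; this turns $\beta\,\Delta r^{m+1}/(\tau\Delta r)$ into $O(\Delta r^{m-2})$, $\beta\,\Delta r^{m+3}/(\tau\Delta r)$ into $O(\Delta r^{m})$, and likewise for the odd-$m$ subcases, yielding exactly (\ref{eq:LOE_caseI}) and (\ref{eq:LOE_caseII}). The hard part will not be any single computation but the bookkeeping that guarantees no lower-order survivor has been overlooked: one must check that monomials collapsing to a factor $\beta^{2}$ (which arise, for instance, when both $a$ and $b$ are odd) are genuinely subleading, since $\beta$ has nonnegative order by Def. \ref{def:UOFP}, and that the parity selection forces the first nonvanishing order to jump by two past any structurally vanishing block. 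It is this ``$+2$ spacing''---already anticipated in (\ref{eq:calF^(l)_n}), whose blocks are spaced by two---that makes the resultant exponent even in every case except $m=0,\epsilon=\delta=1$, so that the corollary indeed follows immediately from Lemma \ref{lem:F^(l)_n} together with (\ref{eq:S_nu(a)}), (\ref{eq:S_etanu}) and (\ref{eq:sum_nugA}).
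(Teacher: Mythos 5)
Your proposal is correct and follows essentially the route the paper intends for this corollary: substitute the expansion of Lemma \ref{lem:F^(l)_n} into $K$, collapse the $\stackrel{\leftarrow}{\underline{\omega}}_{k-1}$-sum via Eq. (\ref{eq:sum_nugA}), reduce the $\omega$, $\omega_{0}$ sums to $S_{\eta\nu}(\delta,a)S_{\nu}(b)$ factors, and rerun the parity/order case analysis of Theorem \ref{theorem:LOE}, finishing with $\beta/\tau=2D/\Delta r^{2}$. Your four $K$-orders and the resulting exponents match the paper's Cases (I-1), (I-2), (II-1), (II-2) exactly, and you correctly flag the two subtleties the paper itself relies on (the $\beta^{2}$ terms being subleading since $\beta$ has nonnegative order, and the $+2$ spacing of surviving blocks).
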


Now we are in a position to have the leading-order terms of the averaged
quantity $\left\langle A\right\rangle _{A_{n}}^{(l)}$. In what follows,
some variables are defined as follows

\[
\, X_{1,0}^{(1)}=X_{1,0}^{(2)}\equiv\frac{v}{D}\rho_{n}^{(l)}\equiv X_{1,0},
\]
\[
X_{0,1}^{(1)}=X_{0,1}^{(2)}\equiv-\frac{v}{D}\rho_{n}^{(l)}+(\rho_{r})_{n}^{(l)}\equiv X_{0,1},
\]
and this is the replacement of $\overleftarrow{\rho}_{n}^{(l)}$ and
$\overleftarrow{\left(\rho_{r}\right)}_{n}^{(l)}$ in $\overleftarrow{X}_{0,1}^{(\cdot)}$
or $\overleftarrow{X}_{1,0}^{(\cdot)}$ (cf. Table. \ref{tab:xpq&ypq})
with $\rho_{n}^{(l)}$ and $(\rho_{r})_{n}^{(l)}$, respectively.
Other quantities $X_{p,q}^{(i)}$ and $Y_{p,q}^{(i)}$ are similarly
defined. These operations are originated from the property (\ref{eq:sum_nugA}).
$\rho_{n}^{(l)}$ and $(\rho_{r})_{n}^{(l)}$ are identified with
$\rho(r,t)$ and $(\partial\rho/\partial r)(r,t)$, respectively,
when we regard these quantities as those defined at a position $r=n\Delta r$
and time $t=l\tau$. 

In addition, we assume that $\sum_{p=a}^{b}A_{p}=0$ when \textit{$b<a$}.
The leading terms often include the form of $\left[(A)^{(1)}+\epsilon(-1)^{m}(A)^{(2)}\right]$
(see Case (II-1) below). In what follows, therefore, only the first
part $(A)^{(1)}$ is described and the second counterpart is often
abridged to $(\,^{(2)})$.

\begin{theorem}[\bf{Leading-order form}]\begin{subequations}\label{eq:Leadingform}Consider
the averaged value $\left\langle A\right\rangle _{A_{n}}^{(l)}$ of
a \textit{$T^{1}$}-type local quantity $A^{(l)}(f,m,\hat{C})$ for
a UOFP type multibaker chain system partitioned by a Markov partition
$(n,\underline{\omega}_{k})$. Its leading-order terms have the following
form:

\subsubsection*{Case (I-1): (m-2)th order term}

For the case of \textit{m}=0 
\[
\frac{K}{\tau\Delta r}\cong\frac{4\hat{C}(\rho_{n}^{(l)})}{\tau},
\]
and for the case $m=2m'\,(m'>0)$ 
\begin{equation}
\frac{K}{\tau\Delta r}\cong\frac{D\hat{C}(\rho_{n}^{(l)})}{2^{m-3}}\left(X_{0,1}^{m}+X_{1,0}^{m}+\beta\sum_{p=1}^{m'-1}\binom{m}{2p}X_{1,0}^{2p}X_{0,1}^{2(m'-p)}\right)\Delta r^{m-2}.\label{eq:Leadform_I1}
\end{equation}

\subsubsection*{Case (II-1): (m-1)th order term}

For the case of $m=2m'+1\,(m'\geq0)$ we obtain

\begin{multline}
\frac{K}{\tau\Delta r}\cong-(1+\epsilon(-1)^{m})\frac{v\hat{C}(\rho_{n}^{(l)})}{2^{m}}\left[(\delta+1)\sum_{p=0}^{m'}\binom{m}{2p}X_{1,0}^{2p}X_{0,1}^{2(m'-p)+1}S_{\nu}(2p)\right.\\
\left.+(\delta-1)\sum_{p=0}^{m'}\binom{m}{2p+1}X_{1,0}^{2p+1}X_{0,1}^{2(m'-p)}S_{\nu}(2(m'-p))\right]\Delta r^{m-1}\\
\shoveleft+(\delta+1)\frac{D}{2^{m}}\hat{C}_{\rho}(\rho_{n}^{(l)})\left[\sum_{p=0}^{m'}\binom{m}{2p}X_{1,0}^{2p}X_{0,1}^{2(m'-p)+1}(Y_{0,1}^{(1)}-\epsilon Y_{0,1}^{(2)})S_{\nu}(2p)\right.\\
\left.+\sum_{p=0}^{m'}\binom{m}{2p+1}X_{1,0}^{2p+1}X_{0,1}^{2(m'-p)}(Y_{1,0}^{(1)}-\epsilon Y_{1,0}^{(2)})S_{\nu}(2(m'-p))\right]\Delta r^{m-1}\\
\shoveleft+(\delta+1)\frac{mD}{2^{m}}\hat{C}(\rho_{n}^{(l)})\left\{ \sum_{p=0}^{m'}\binom{2m'}{2p}X_{1,0}^{2p}X_{0,1}^{2(m'-p)}\left[X_{2,0}^{(1)}S_{\nu}(2(m'-p))+X_{0,2}^{(1)}S_{\nu}(2p)\right.\right.\\
\left.-\epsilon\left(X_{2,0}^{(2)}S_{\nu}(2(m'-p))+X_{0,2}^{(2)}S_{\nu}(2p)\right)\right]\\
\left.+\beta\sum_{p=0}^{m'-1}\binom{2m'}{2p+1}X_{1,0}^{2p+1}X_{0,1}^{2(m'-p)-1}(X_{1,1}^{(1)}-\epsilon X_{1,1}^{(2)})\right\} \Delta r^{m-1}.\label{eq:Leadform_II-1}
\end{multline}

\subsubsection*{Case (I-2): mth order term}

For $m=2m'\,(m'\geq0)$ the leading terms are as follows:
\begin{multline}
\frac{K}{\tau\Delta r}\cong\beta\frac{(1+\epsilon)(\delta-1)}{2^{m+1}}\frac{v^{2}\hat{C}(\rho_{n}^{(l)})}{D}\sum_{p=0}^{m'-1}\binom{m}{2p+1}X_{1,0}^{2p+1}X_{0,1}^{2(m'-p)-1}\Delta r^{m}\\
\shoveleft-\frac{v}{2^{m+1}}\hat{C}_{\rho}(\rho_{n}^{(l)})\left\{ \sum_{p=0}^{m'}\binom{m}{2p}X_{1,0}^{2p}X_{0,1}^{2(m'-p)}\left[(\delta+1)Y_{0,1}^{(1)}S_{\nu}(2p)\right.\right.\\
\left.+(\delta-1)Y_{1,0}^{(1)}S_{\nu}(2(m'-p))+\epsilon(\,^{(2)})\right]\\
+\beta\sum_{p=0}^{m'-1}\binom{m}{2p+1}X_{1,0}^{2p+1}X_{0,1}^{2(m'-p)-1}\\
\left.\text{\ensuremath{\times}}\left[(\delta-1)Y_{0,1}^{(1)}+(\delta+1)Y_{1,0}^{(1)}+\epsilon(\,^{(2)})\right]\right\} \Delta r^{m}\\
\displaybreak[3]\\
\shoveleft-\frac{m}{2^{m+1}}v\hat{C}(\rho_{n}^{(l)})\left\{ \sum_{p=0}^{m'-1}\binom{m-1}{2p}X_{1,0}^{2p}X_{0,1}^{2(m'-p)-1}\left[(\delta+1)X_{0,2}^{(1)}S_{\nu}(2p)\right.\right.\\
\left.+\beta(\delta-1)X_{1,1}^{(1)}+\beta(\delta+1)X_{2,0}^{(1)}+\epsilon(\,^{(2)})\right]\\
+\sum_{p=0}^{m'-1}\binom{m-1}{2p+1}X_{1,0}^{2p+1}X_{0,1}^{2(m'-p-1)}\left[\beta(\delta-1)X_{0,2}^{(1)}\right.\\
+\beta(\delta+1)X_{1,1}^{(1)}+(\delta-1)X_{2,0}^{(1)}S_{\nu}(2(m'-p-1))\\
\left.\left.+\epsilon(\,^{(2)})\right]\right\} \Delta r^{m}\\
\shoveleft+(\delta+1)\frac{D}{2^{m+1}}\left\{ \sum_{p=0}^{m'}\binom{m}{2p}X_{1,0}^{2p}X_{0,1}^{2(m'-p)}\left[\hat{C}_{\rho}(\rho_{n}^{(l)})\left(Y_{0,2}^{(1)}S_{\nu}(2p)\right.\right.\right.\\
\left.+Y_{2,0}^{(1)}S_{\nu}(2(m'-p))+\epsilon(\,^{(2)})\right)\\
\left.+\frac{1}{2}\hat{C}_{\rho\rho}(\rho_{n}^{(l)})\left(Y_{0,1}^{(1)^{2}}S_{\nu}(2p)+Y_{1,0}^{(1)^{2}}S_{\nu}(2(m'-p))+\epsilon(\,^{(2)})\right)\right]\\
+\beta\sum_{p=0}^{m'-1}\binom{m}{2p+1}X_{1,0}^{2p+1}X_{0,1}^{2(m'-p)-1}\left[\hat{C}_{\rho}(\rho_{n}^{(l)})\left(Y_{1,1}^{(1)}+\epsilon Y_{1,1}^{(2)}\right)\right.\\
\left.\left.+\hat{C}_{\rho\rho}(\rho_{n}^{(l)})\left(Y_{0,1}^{(1)}Y_{1,0}^{(1)}+\epsilon(\,^{(2)})\right)\right]\right\} \Delta r^{m}\\
\shoveleft+(\delta+1)\frac{mD}{2^{m+1}}\left\{ \sum_{p=0}^{m'-1}\binom{m-1}{2p}X_{1,0}^{2p}X_{0,1}^{2(m'-p)-1}\left[\hat{C}_{\rho}(\rho_{n}^{(l)})\left(X_{0,2}^{(1)}Y_{0,1}^{(1)}S_{\nu}(2p)\right.\right.\right.\\
\left.+\beta X_{1,1}^{(1)}Y_{1,0}^{(1)}+\beta X_{2,0}^{(1)}Y_{0,1}^{(1)}+\epsilon(\,^{(2)})\right)+\hat{C}(\rho_{n}^{(l)})\left(X_{0,3}^{(1)}S_{\nu}(2p)\right.\\
\left.\left.+\beta X_{2,1}^{(1)}+\epsilon(\,^{(2)})\right)\right]+\sum_{p=0}^{m'-1}\binom{m-1}{2p+1}X_{1,0}^{2p+1}X_{0,1}^{2(m'-p-1)}\\
\times\left[\hat{C}_{\rho}(\rho_{n}^{(l)})\left(\beta X_{0,2}^{(1)}Y_{1,0}^{(1)}+\beta X_{1,1}^{(1)}Y_{0,1}^{(1)}+X_{2,0}^{(1)}Y_{1,0}^{(1)}S_{\nu}(2(m'-p-1))\right.\right.\\
\left.\left.\left.+\epsilon(\,^{(2)})\right)+\hat{C}(\rho_{n}^{(l)})\left(\beta X_{1,2}^{(1)}+X_{3,0}^{(1)}S_{\nu}(2(m'-p-1))+\epsilon(\,^{(2)})\right)\right]\right\} \Delta r^{m}\\
\shoveleft+(\delta+1)\frac{m(m-1)}{2^{m+2}}D\hat{C}(\rho_{n}^{(l)})\left\{ \sum_{p=0}^{m'-1}\binom{m-2}{2p}X_{1,0}^{2p}X_{0,1}^{2(m'-p-1)}\left[X_{0,2}^{(1)^{2}}S_{\nu}(2p)\right.\right.\\
\left.+2\beta X_{0,2}^{(1)}X_{2,0}^{(1)}+\beta X_{1,1}^{(1)^{2}}+X_{2,0}^{(1)^{2}}S_{\nu}(2(m'-p-1))+\epsilon(\,^{(2)})\right]\\
+\beta\sum_{p=0}^{m'-2}\binom{m-2}{2p+1}X_{1,0}^{2p+1}X_{0,1}^{2(m'-p-1)-1}\left[2X_{0,2}^{(1)}X_{1,1}^{(1)}+2X_{1,1}^{(1)}X_{2,0}^{(1)}\right.\\
\left.\left.+\epsilon(\,^{(2)})\right]\right\} \Delta r^{m}.\label{eq:Leadform_I-2}
\end{multline}

\subsubsection*{Case (II-2):(m+1)th order term}

For the case of $m=2m'+1\,(m'>0)$ 

\begin{multline}
\frac{K}{\tau\Delta r}\cong-\frac{\beta}{2^{m+1}}\frac{v^{2}\hat{C_{\rho}}(\rho_{n}^{(l)})}{D}\left[\sum_{p=0}^{m'}\binom{m}{2p}X_{1,0}^{2p}X_{0,1}^{2(m'-p)+1}(Y_{1,0}^{(1)}-Y_{1,0}^{(2)})\right.\\
\left.+\sum_{p=0}^{m'}\binom{m}{2p+1}X_{1,0}^{2p+1}X_{0,1}^{2(m'-p)}(Y_{0,1}^{(1)}-Y_{0,1}^{(2)})\right]\Delta r^{m+1}\\
\shoveleft-\frac{m\beta}{2^{m+1}}\frac{v^{2}\hat{C}(\rho_{n}^{(l)})}{D}\left[\sum_{p=0}^{m'}\binom{m-1}{2p}X_{1,0}^{2p}X_{0,1}^{2(m'-p)}(X_{1,1}^{(1)}-X_{1,1}^{(2)})\right.\\
\left.+\sum_{p=0}^{m'-1}\binom{m-1}{2p+1}X_{1,0}^{2p+1}X_{0,1}^{2(m'-p)-1}\left(X_{0,2}^{(1)}+X_{2,0}^{(1)}-(\,^{(2)})\right)\right]\Delta r^{m+1}\\
\shoveleft+\frac{v}{2^{m+1}}\left\{ \beta\sum_{p=0}^{m'}\binom{m}{2p}X_{1,0}^{2p}X_{0,1}^{2(m'-p)+1}\left[\hat{C}_{\rho}(\rho_{n}^{(l)})(Y_{1,1}^{(1)}-Y_{1,1}^{(2)})\right.\right.\\
\left.+\hat{C}_{\rho\rho}(\rho_{n}^{(l)})(Y_{0,1}^{(1)}Y_{1,0}^{(1)}-(\,^{(2)}))\right]\\
+\sum_{p=0}^{m'}\binom{m}{2p+1}X_{1,0}^{2p+1}X_{0,1}^{2(m'-p)}\left[\hat{C}_{\rho}(\rho_{n}^{(l)})\left(\beta Y_{0,2}^{(1)}+Y_{2,0}^{(1)}S_{\nu}(2(m'-p))\right.\right.\\
\left.\left.\left.-(\,^{(2)})\right)+\frac{1}{2}\hat{C}_{\rho\rho}(\rho_{n}^{(l)})\left(\beta Y_{0,1}^{(1)^{2}}+Y_{1,0}^{(1)^{2}}S_{\nu}(2(m'-p))-(\,^{(2)})\right)\right]\right\} \Delta r^{m+1}\\
\shoveleft+\frac{mv}{2^{m+1}}\left\{ \sum_{p=0}^{m'}\binom{m-1}{2p}X_{1,0}^{2p}X_{0,1}^{2(m'-p)}\left[\hat{C}_{\rho}(\rho_{n}^{(l)})\left(\beta X_{0,2}^{(1)}Y_{1,0}^{(1)}\right.\right.\right.\\
\left.+\beta X_{1,1}^{(1)}Y_{0,1}^{(1)}+X_{2,0}^{(1)}Y_{1,0}^{(1)}S_{\nu}(2(m'-p))-(\,^{(2)})\right)\\
\left.+\hat{C}(\rho_{n}^{(l)})\left(\beta X_{1,2}^{(1)}+X_{3,0}^{(1)}S_{\nu}(2(m'-p))-(\,^{(2)})\right)\right]\\
+\beta\sum_{p=0}^{m'-1}\binom{m-1}{2p+1}X_{1,0}^{2p+1}X_{0,1}^{2(m'-p)-1}\left[\hat{C}_{\rho}(\rho_{n}^{(l)})\left(X_{0,2}^{(1)}Y_{0,1}^{(1)}+X_{1,1}^{(1)}Y_{1,0}^{(1)}\right.\right.\\
\left.\left.\left.+X_{2,0}^{(1)}Y_{0,1}^{(1)}-(\,^{(2)})\right)+\hat{C}(\rho_{n}^{(l)})\left(X_{2,1}^{(1)}+X_{0,3}^{(1)}-(\,^{(2)})\right)\right]\right\} \Delta r^{m+1}\\
\displaybreak[3]\\
\shoveleft+\frac{m(m-1)}{2^{m+2}}v\hat{C}(\rho_{n}^{(l)})\left[2\beta\sum_{p=0}^{m'-1}\binom{m-2}{2p}X_{1,0}^{2p}X_{0,1}^{2(m'-p)-1}\left(X_{0,2}^{(1)}X_{1,1}^{(1)}\right.\right.\\
\left.+X_{1,1}^{(1)}X_{2,0}^{(1)}-(\,^{(2)})\right)+\sum_{p=0}^{m'-1}\binom{m-2}{2p+1}X_{1,0}^{2p+1}X_{0,1}^{2(m'-p-1)}\\
\times\left(\beta X_{0,2}^{(1)^{2}}+2\beta X_{0,2}^{(1)}X_{2,0}^{(1)}+\beta X_{1,1}^{(1)^{2}}+X_{2,0}^{(1)^{2}}S_{\nu}(2(m'-p-1))\right.\\
\left.\left.-(\,^{(2)})\right)\right]\Delta r^{m+1}.\label{eq:Leadform_II-2}
\end{multline}

\end{subequations}\end{theorem}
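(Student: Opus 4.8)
The plan is to feed the expansion of $\mathcal{F}_{n}^{(l)}$ supplied by Lemma \ref{lem:F^(l)_n} into the exact expression (\ref{eq:K}) for $K$ and then to carry out the three summations --- over $\overleftarrow{\underline{\omega}}_{k-1}$, over $\omega_{0}$, and over $\omega$ --- separately, reading off in each of the four cases precisely those contributions that reach the order already guaranteed by the LOE (Theorem \ref{theorem:LOE}). First I would factor the Liouville weight as $\nu(\underline{\omega}_{k})=\nu(\omega_{0})\,\nu(\overleftarrow{\underline{\omega}}_{k-1})$ and write
\[
K=\Delta r\sum_{\omega}\bigl(\eta(\omega)+\delta\nu(\omega)\bigr)\sum_{\omega_{0}}\nu(\omega_{0})\sum_{\overleftarrow{\underline{\omega}}_{k-1}}\nu(\overleftarrow{\underline{\omega}}_{k-1})\,\mathcal{F}_{n}^{(l)}(\epsilon,m,\omega,\underline{\omega}_{k}).
\]
Substituting (\ref{eq:calF^(l)_n}) together with the power series (\ref{eq:X&Y}) turns every term into a product of a monomial $I_{\omega}^{p}I_{\omega_{0}}^{q}$, which carries all of the $\omega,\omega_{0}$ dependence, and a coefficient assembled from $\hat{C},\hat{C}_{\rho},\hat{C}_{\rho\rho}$ and the arrowed data $\overleftarrow{X}_{p,q}^{(i)},\overleftarrow{Y}_{p,q}^{(i)}$, which carries all of the $\overleftarrow{\underline{\omega}}_{k-1}$ dependence. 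Since these two factors are independent, the three sums decouple.

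Next I would evaluate the innermost sum over $\overleftarrow{\underline{\omega}}_{k-1}$ by means of (\ref{eq:sum_nugA}): each coefficient is a smooth function of the recursively defined quantities $\overleftarrow{\rho}_{n}^{(l)},\overleftarrow{(\rho_{r})}_{n}^{(l)},\dots$, so this summation merely strips the arrows, $\overleftarrow{X}_{p,q}^{(i)}\mapsto X_{p,q}^{(i)}$ and likewise for $\overleftarrow{Y}$ and $\hat{C}$, up to an $O(\Delta r)$ remainder that is subleading and may be dropped. The outer sums over $\omega_{0}$ and $\omega$ are then executed monomial by monomial using $\sum_{\omega_{0}}\nu(\omega_{0})I_{\omega_{0}}^{q}=S_{\nu}(q)$ and $\sum_{\omega}\bigl(\eta(\omega)+\delta\nu(\omega)\bigr)I_{\omega}^{p}=S_{\eta\nu}(\delta,p)$ from (\ref{eq:S_nu(a)})--(\ref{eq:S_etanu}). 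Two rules govern the bookkeeping: the factor $\delta+(-1)^{p}$ inside $S_{\eta\nu}$ annihilates a monomial whose $I_{\omega}$-parity is incompatible with $\delta$, and every odd exponent (whether through $S_{\nu}$ or the odd branch of $S_{\eta\nu}$) raises the $\Delta r$-order by one while attaching a factor $\beta$, since $Pe_{g}=O(\Delta r)$. Finally, dividing by $\tau\Delta r$ and inserting $\tau=\beta\Delta r^{2}/(2D)$ from (\ref{eq:eta-nu}) both fixes the global power of $\Delta r$ and cancels exactly one power of $\beta$; this cancellation is why $\beta$ appears only to first power in the stated forms, even though the summed $\mathcal{F}$ itself can carry $\beta^{2}$.

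With this apparatus the four cases reduce to selecting the surviving monomials at the order prescribed by Theorem \ref{theorem:LOE}. Case (I-1) is the cleanest: only the $(\Delta r/2)^{m}$ layer of $\mathcal{F}$ contributes, and after the binomial expansion of $\overleftarrow{X}_{1}(\omega,\omega_{0})^{m}$ (a binomial in $I_{\omega},I_{\omega_{0}}$ with coefficients $X_{1,0},X_{0,1}$) only the even--even exponents survive $S_{\eta\nu}(1,\cdot)S_{\nu}(\cdot)$, yielding the forms (\ref{eq:Leadform_I1}) for $m=0$ and $m=2m'$ directly. In Case (II-1) the leading order receives contributions from the $(\Delta r/2)^{m}$ layer --- whose odd monomials are promoted, furnishing the $\hat{C}$-terms with explicit $v$ --- and, when $\delta=1$, also from the $(\Delta r/2)^{m+1}$ layer --- whose even monomials land without promotion, furnishing the $\hat{C}_{\rho}$- and $X_{2},X_{3}$-terms of (\ref{eq:Leadform_II-1}); the antisymmetric $(\delta\pm1)$ weightings and the $Y_{p,q}^{(1)}-\epsilon Y_{p,q}^{(2)}$ differences are simply the residue of the parity selection.

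The genuinely laborious part is Cases (I-2) and (II-2). Here the would-be leading $(\Delta r/2)^{m}$ layer is wiped out either by $1+\epsilon(-1)^{m}=0$ or by the even--even selection under $\delta=-1$, and one must descend through the $(\Delta r/2)^{m+1}$ and $(\Delta r/2)^{m+2}$ layers of (\ref{eq:calF^(l)_n}). Crucially, a layer whose direct contribution seems to vanish can still re-enter at the target order through a \emph{double} order increase: its odd--odd monomials acquire a factor $\beta^{2}$, which the $\tau$-division reduces to the single $\beta$ seen in the $(1+\epsilon)(\delta-1)$ term of (\ref{eq:Leadform_I-2}) and in the $\hat{C}_{\rho}\,(Y_{1,0}^{(1)}-Y_{1,0}^{(2)})$ head of (\ref{eq:Leadform_II-2}). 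The work is then to expand $\overleftarrow{X}_{1}^{m},\overleftarrow{X}_{1}^{m-1},\overleftarrow{X}_{1}^{m-2}$ against $\overleftarrow{X}_{2}^{(i)},\overleftarrow{X}_{3}^{(i)},\overleftarrow{Y}_{1}^{(i)},\overleftarrow{Y}_{2}^{(i)}$, to classify every resulting monomial by its $(I_{\omega},I_{\omega_{0}})$ parities so as to decide whether it is annihilated, promoted, or retained and with which $S_{\nu}(2p)$ or $S_{\nu}(2(m'-p))$ weight, and finally to regroup the survivors into the binomial sums of (\ref{eq:Leadform_I-2}) and (\ref{eq:Leadform_II-2}). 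I expect this classification and regrouping to be the main obstacle; the only conceptually delicate point is confirming that the $O(\Delta r^{m+3})$ tail of Lemma \ref{lem:F^(l)_n} contributes nothing, which holds because its order-raising forces the coefficient $\delta+1$ that vanishes precisely in these cases. Everything else is bounded, mechanical algebra.
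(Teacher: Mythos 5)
Your proposal is correct and follows essentially the same route as the paper: the paper's own proof is exactly the substitution of the expansion from Lemma \ref{lem:F^(l)_n} into Eq. (\ref{eq:K}) followed by the operations (\ref{eq:S_nu(a)}), (\ref{eq:S_etanu}) and (\ref{eq:sum_nugA}), and your parity/order bookkeeping---the even/odd selection rules, the $\beta$-promotion, the $\beta^{2}$ re-entry after division by $\tau$, and the $(\delta+1)$ annihilation of the $O(\Delta r^{m+3})$ tail---is precisely the mechanism the paper relies on through Lemma \ref{lem:order_K} and the proof of Theorem \ref{theorem:LOE}. The only slip is cosmetic: the $X_{3}$-type coefficients enter (\ref{eq:Leadform_I-2}) and (\ref{eq:Leadform_II-2}), not (\ref{eq:Leadform_II-1}), which contains only $X_{2}$-type terms.
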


\begin{proof}follows immediately from Lemma \ref{lem:F^(l)_n}, i.e.
the operations (\ref{eq:S_nu(a)}), (\ref{eq:S_etanu}) and (\ref{eq:sum_nugA})
on the operand (summand) $\mathcal{F}_{n}^{(l)}$, Eq. (\ref{eq:calF^(l)_n}).\end{proof}

\begin{remark}We can find that the leading terms (\ref{eq:Leadingform})
involves $\beta$ and, therefore, its quantity depends on a coarse-graining,
partitioning parameter: it depends on how the time step $\tau$ approaches
to zero in the continuum limit, $\Delta r\rightarrow0$. It is easy
for us to convince or show that the second leading-order terms are
also dependent on $\beta$. We can numerically confirm the dependency
when $\Delta r$ approaches to zero. The parameter $\beta$, however,
is nothing to do with the leading order, estimated from the LOE. Thus
the order itself is independent of $\beta$. Moreover, we can find
that the leading terms are independent of $\beta$ if and only if
their order is $O(1)$. Such an ``observable'' order is achieved
for an appropriate characteristic exponent \textit{m} except the case
(II-2) because \textit{m} cannot be negative integer. It follows that
its limit value in the continuum limit, i.e. macroscopic quantity,
is also independent of the parameter $\beta$. \end{remark}

\begin{remark}We also notice that the leading terms are also independent
of the other partitioning parameter, digit number \textit{k} for $k\geq1$.
As the results, the LOE and the macroscopic quantities are independent
of partitioning parameters. This is essential for the coarse-grained
dynamics to be consistent with irreversible thermodynamics \cite{vollmers,gilbert1}.
The least number of $k=1$ corresponds to the approach of level-0
partitioning, utilized by Vollmer, T\'{e}l, M\'{a}ty\'{a}s, et al.
\cite{vollmer1,vollmer5,vollmer3,cross1,cross2,cross3,vollmer2,vollmers}.
The property comes from the fact that the leading terms, as well as
the LOE, are determined by the properties related to only two trinary
bits $\omega$ and $\omega_{0}$ (Lemma \ref{lem:order_K}). For the
case of \textit{k}=0, one of the trinary bit $\omega_{0}$ vanishes
and, therefore, another LOE or leading form appears. However, Vollmer,
T\'{e}l, and Breymann \cite{vollmer1} showed that such a ``projected''
dynamics is not physically relevant in the sense that it cannot explain
the effects of thermostat.\end{remark}

\begin{example}Ishida \cite{ishida2012} showed that the balance
equation for local entropy density $s_{i}^{(l)}(\equiv-\rho_{i}^{(l)}\ln(\rho_{i}^{(l)}/\rho_{r}))$
can be expressed as follows:
\begin{eqnarray}
\Delta V_{i}\Delta s_{i}^{(l)} & \equiv & \Delta V_{i}(s_{i}^{(l+1)}-s_{i}^{(l)})\nonumber \\
 & = & \hat{S}_{c,i}^{(l)}+\hat{S}_{d,i}^{(l)}+\hat{S}_{p,i}^{(l)}+\hat{S}_{th,i}^{(l)}+r_{T,i}^{(l)},\label{eq:localentbalance}
\end{eqnarray}
where

\begin{eqnarray*}
\hat{S}_{c,i}^{(l)} & \equiv & \sum_{j}\left[\frac{J_{j\leftarrow i}^{[a](l)}}{\Delta_{i,j}}\left(1+\frac{\ln(\rho_{j}^{(l)}/\rho_{r})+\ln(\rho_{i}^{(l)}/\rho_{r})}{2}\right)+\frac{\tilde{D}_{i,j}^{[a]}}{\Delta_{i,j}^{2}}(\rho_{j}^{(l)}-\rho_{i}^{(l)})\right],\\
\hat{S}_{d,i}^{(l)} & \equiv & \sum_{j}\left[\frac{J_{j\leftarrow i}^{[s](l)}}{\Delta_{i,j}}\left(1+\frac{\ln(\rho_{j}^{(l)}/\rho_{r})+\ln(\rho_{i}^{(l)}/\rho_{r})}{2}\right)-\frac{\tilde{U}_{j\leftarrow i}^{[s]}}{\Delta_{i,j}}\frac{\rho_{j}^{(l)}+\rho_{i}^{(l)}}{2}\right],\\
\hat{S}_{p,i}^{(l)} & \equiv & \sum_{j}\left(-\frac{J_{j\leftarrow i}^{[\hat{a}](l)}}{\Delta_{i,j}}\frac{\ln\rho_{j}^{(l)}-\ln\rho_{i}^{(l)}}{2}\right),\\
\hat{S}_{th,i}^{(l)} & \equiv & \sum_{j}\left[-\frac{J_{j\leftarrow i}^{[\hat{s}](l)}}{\Delta_{i,j}}\frac{\ln\rho_{j}^{(l)}-\ln\rho_{i}^{(l)}}{2}\right.\\
 &  & \left.+\frac{1}{\Delta_{i,j}}\left(\tilde{U}_{j\leftarrow i}^{[s]}\frac{\rho_{j}^{(l)}-\rho_{i}^{(l)}}{2}-\frac{\tilde{D}_{i,j}^{[a]}}{\Delta_{i,j}}(\rho_{j}^{(l)}+\rho_{i}^{(l)})\right)\right],\\
r_{T,i}^{(l)} & \equiv & -2\Delta V_{i}\rho_{i}^{(l+1)}\left(\frac{\ln\rho_{i}^{(l+1)}-\ln\rho_{i}^{(l)}}{2}-\frac{\Delta\rho_{i}^{(l)}}{2\rho_{i}^{(l+1)}}\right).
\end{eqnarray*}
By use of the leading terms (\ref{eq:Leadingform}), and LOE (\ref{eq:LOE}),
we can now derive the averaged form of each component of the above
local entropy change in the bulk system as follows:
\begin{eqnarray*}
\left\langle \hat{S}_{c}\right\rangle _{A_{n}}^{(l)} & = & -\frac{\partial}{\partial r}(sv-\rho v)+O(\Delta r^{2}),\\
\left\langle \hat{S}_{d}\right\rangle _{A_{n}}^{(l)} & = & D\frac{\partial^{2}s}{\partial r^{2}}+O(\Delta r^{2}),\\
\left\langle \hat{S}_{p}\right\rangle _{A_{n}}^{(l)} & = & \frac{j^{2}}{\rho D}+O(\Delta r^{2}),\\
\left\langle \hat{S}_{th}\right\rangle _{A_{n}}^{(l)} & = & -\frac{jv}{D}+O(\Delta r^{2}),
\end{eqnarray*}
where $s\equiv-\rho\ln(\rho/\rho_{r}),\, j\equiv\rho v-D\partial\rho/\partial r$.
Herein $\left\langle A\right\rangle _{A_{n}}^{(l)}$ denotes the averaged
quantity on $A_{n}$ at the \textit{l}th time step, interpreted as
the quantity at $r=n\Delta r$ and $t=l\tau$ on its right side. Considering
that the averaging of $\Delta V_{i}\Delta s_{i}^{(l)}$ is the time
derivative of the averaged entropy density on $A_{n}$ and that $\left\langle r_{T}\right\rangle _{A_{n}}^{(l)}$
is $O(\tau)$ \cite{ishida2012}, the averaging of Eq. (\ref{eq:localentbalance})
recovers the macroscopic entropy balance equation of irreversible
thermodynamics in the continuum limit, $\Delta r\rightarrow0$. It
has been numerically confirmed by Ishida \cite{ishida2012}.\end{example}

\section{Concluding remarks}

For $T^{1}$-type local quantities on the UOFP multibaker chain system,
a law of order estimation (LOE) to estimate the order exponent of
the averaged local quantities, originally conjectured by Ishida\cite{ishida2012},
becomes a theorem. Furthermore, the form of the leading-order terms
for the quantities is derived, and we can confirm the order and the
finite limit value in the continuum limit are independent of partitioning
parameters, such as $\beta$ and trit number \textit{k}. The results
fully explain the numerical results in the bulk system, obtained by
Ishida \cite{ishida1,ishida2012}, and they are consistent with the
irreversible thermodynamics. 
\begin{acknowledgments}
I am grateful to Joel L. Lebowitz and Sheldon Goldstein for enlightening
discussions. This work was partially supported by the Japan Society
for the Promotion of Science, Institutional Program 'Development of
International Network for Training of Young Researchers Exploring
Multidisciplinary Fields', No. J091113016.
\end{acknowledgments}
\bibliographystyle{aipnum4-1}

\begin{thebibliography}{20}%
\makeatletter
\providecommand \@ifxundefined [1]{%
 \@ifx{#1\undefined}
}%
\providecommand \@ifnum [1]{%
 \ifnum #1\expandafter \@firstoftwo
 \else \expandafter \@secondoftwo
 \fi
}%
\providecommand \@ifx [1]{%
 \ifx #1\expandafter \@firstoftwo
 \else \expandafter \@secondoftwo
 \fi
}%
\providecommand \natexlab [1]{#1}%
\providecommand \enquote  [1]{``#1''}%
\providecommand \bibnamefont  [1]{#1}%
\providecommand \bibfnamefont [1]{#1}%
\providecommand \citenamefont [1]{#1}%
\providecommand \href@noop [0]{\@secondoftwo}%
\providecommand \href [0]{\begingroup \@sanitize@url \@href}%
\providecommand \@href[1]{\@@startlink{#1}\@@href}%
\providecommand \@@href[1]{\endgroup#1\@@endlink}%
\providecommand \@sanitize@url [0]{\catcode `\\12\catcode `\$12\catcode
  `\&12\catcode `\#12\catcode `\^12\catcode `\_12\catcode `\%12\relax}%
\providecommand \@@startlink[1]{}%
\providecommand \@@endlink[0]{}%
\providecommand \url  [0]{\begingroup\@sanitize@url \@url }%
\providecommand \@url [1]{\endgroup\@href {#1}{\urlprefix }}%
\providecommand \urlprefix  [0]{URL }%
\providecommand \Eprint [0]{\href }%
\providecommand \doibase [0]{http://dx.doi.org/}%
\providecommand \selectlanguage [0]{\@gobble}%
\providecommand \bibinfo  [0]{\@secondoftwo}%
\providecommand \bibfield  [0]{\@secondoftwo}%
\providecommand \translation [1]{[#1]}%
\providecommand \BibitemOpen [0]{}%
\providecommand \bibitemStop [0]{}%
\providecommand \bibitemNoStop [0]{.\EOS\space}%
\providecommand \EOS [0]{\spacefactor3000\relax}%
\providecommand \BibitemShut  [1]{\csname bibitem#1\endcsname}%
\let\auto@bib@innerbib\@empty
\bibitem [{\citenamefont {Gilbert}\ and\ \citenamefont
  {Dorfman}(1999)}]{gilbert2}%
  \BibitemOpen
  \bibfield  {author} {\bibinfo {author} {\bibfnamefont {T.}~\bibnamefont
  {Gilbert}}\ and\ \bibinfo {author} {\bibfnamefont {J.}~\bibnamefont
  {Dorfman}},\ }\href@noop {} {\bibfield  {journal} {\bibinfo  {journal} {J.
  Stat. Phys.}\ }\textbf {\bibinfo {volume} {96}},\ \bibinfo {pages} {225}
  (\bibinfo {year} {1999})}\BibitemShut {NoStop}%
\bibitem [{\citenamefont {Vollmer}, \citenamefont {T{\'e}l},\ and\
  \citenamefont {Breymann}(1998)}]{vollmer1}%
  \BibitemOpen
  \bibfield  {author} {\bibinfo {author} {\bibfnamefont {J.}~\bibnamefont
  {Vollmer}}, \bibinfo {author} {\bibfnamefont {T.}~\bibnamefont {T{\'e}l}}, \
  and\ \bibinfo {author} {\bibfnamefont {W.}~\bibnamefont {Breymann}},\
  }\href@noop {} {\bibfield  {journal} {\bibinfo  {journal} {Phys. Rev. E}\
  }\textbf {\bibinfo {volume} {58}},\ \bibinfo {pages} {1672} (\bibinfo {year}
  {1998})}\BibitemShut {NoStop}%
\bibitem [{\citenamefont {Breymann}, \citenamefont {T{\'e}l},\ and\
  \citenamefont {Vollmer}(1998)}]{vollmer5}%
  \BibitemOpen
  \bibfield  {author} {\bibinfo {author} {\bibfnamefont {W.}~\bibnamefont
  {Breymann}}, \bibinfo {author} {\bibfnamefont {T.}~\bibnamefont {T{\'e}l}}, \
  and\ \bibinfo {author} {\bibfnamefont {J.}~\bibnamefont {Vollmer}},\
  }\href@noop {} {\bibfield  {journal} {\bibinfo  {journal} {Chaos}\ }\textbf
  {\bibinfo {volume} {8}},\ \bibinfo {pages} {396} (\bibinfo {year}
  {1998})}\BibitemShut {NoStop}%
\bibitem [{\citenamefont {Vollmer}, \citenamefont {T{\'e}l},\ and\
  \citenamefont {Breymann}(1997)}]{vollmer3}%
  \BibitemOpen
  \bibfield  {author} {\bibinfo {author} {\bibfnamefont {J.}~\bibnamefont
  {Vollmer}}, \bibinfo {author} {\bibfnamefont {T.}~\bibnamefont {T{\'e}l}}, \
  and\ \bibinfo {author} {\bibfnamefont {W.}~\bibnamefont {Breymann}},\
  }\href@noop {} {\bibfield  {journal} {\bibinfo  {journal} {Phys. Rev. Lett.}\
  }\textbf {\bibinfo {volume} {79}},\ \bibinfo {pages} {2759} (\bibinfo {year}
  {1997})}\BibitemShut {NoStop}%
\bibitem [{\citenamefont {T{\'e}l}, \citenamefont {Vollmer},\ and\
  \citenamefont {Breymann}(1996)}]{vollmer2}%
  \BibitemOpen
  \bibfield  {author} {\bibinfo {author} {\bibfnamefont {T.}~\bibnamefont
  {T{\'e}l}}, \bibinfo {author} {\bibfnamefont {J.}~\bibnamefont {Vollmer}}, \
  and\ \bibinfo {author} {\bibfnamefont {W.}~\bibnamefont {Breymann}},\
  }\href@noop {} {\bibfield  {journal} {\bibinfo  {journal} {Europhys. Lett.}\
  }\textbf {\bibinfo {volume} {35}},\ \bibinfo {pages} {659} (\bibinfo {year}
  {1996})}\BibitemShut {NoStop}%
\bibitem [{\citenamefont {Ishida}(2009)}]{ishida1}%
  \BibitemOpen
  \bibfield  {author} {\bibinfo {author} {\bibfnamefont {H.}~\bibnamefont
  {Ishida}},\ }\href@noop {} {\bibfield  {journal} {\bibinfo  {journal}
  {Physica A}\ }\textbf {\bibinfo {volume} {388}},\ \bibinfo {pages} {332}
  (\bibinfo {year} {2009})}\BibitemShut {NoStop}%
\bibitem [{\citenamefont {Tasaki}\ and\ \citenamefont
  {Gaspard}(1995)}]{tasaki1}%
  \BibitemOpen
  \bibfield  {author} {\bibinfo {author} {\bibfnamefont {S.}~\bibnamefont
  {Tasaki}}\ and\ \bibinfo {author} {\bibfnamefont {P.}~\bibnamefont
  {Gaspard}},\ }\href@noop {} {\bibfield  {journal} {\bibinfo  {journal} {J.
  Stat. Phys.}\ }\textbf {\bibinfo {volume} {81}},\ \bibinfo {pages} {935}
  (\bibinfo {year} {1995})}\BibitemShut {NoStop}%
\bibitem [{\citenamefont {Gilbert}\ and\ \citenamefont
  {Dorfman}(2000)}]{gilbert3}%
  \BibitemOpen
  \bibfield  {author} {\bibinfo {author} {\bibfnamefont {T.}~\bibnamefont
  {Gilbert}}\ and\ \bibinfo {author} {\bibfnamefont {J.}~\bibnamefont
  {Dorfman}},\ }\href@noop {} {\bibfield  {journal} {\bibinfo  {journal}
  {Physica A}\ }\textbf {\bibinfo {volume} {282}},\ \bibinfo {pages} {427}
  (\bibinfo {year} {2000})}\BibitemShut {NoStop}%
\bibitem [{\citenamefont {Gilbert}, \citenamefont {Dorfman},\ and\
  \citenamefont {Gaspard}(2000)}]{gilbert1}%
  \BibitemOpen
  \bibfield  {author} {\bibinfo {author} {\bibfnamefont {T.}~\bibnamefont
  {Gilbert}}, \bibinfo {author} {\bibfnamefont {J.}~\bibnamefont {Dorfman}}, \
  and\ \bibinfo {author} {\bibfnamefont {P.}~\bibnamefont {Gaspard}},\
  }\href@noop {} {\bibfield  {journal} {\bibinfo  {journal} {Phys. Rev. Lett.}\
  }\textbf {\bibinfo {volume} {85}},\ \bibinfo {pages} {1606} (\bibinfo {year}
  {2000})}\BibitemShut {NoStop}%
\bibitem [{\citenamefont {Gaspard}(1997)}]{gaspard2}%
  \BibitemOpen
  \bibfield  {author} {\bibinfo {author} {\bibfnamefont {P.}~\bibnamefont
  {Gaspard}},\ }\href@noop {} {\bibfield  {journal} {\bibinfo  {journal} {J.
  Stat. Phys.}\ }\textbf {\bibinfo {volume} {88}},\ \bibinfo {pages} {1215}
  (\bibinfo {year} {1997})}\BibitemShut {NoStop}%
\bibitem [{\citenamefont {Dorfman}, \citenamefont {Gaspard},\ and\
  \citenamefont {Gilbert}(2002)}]{dorfman2}%
  \BibitemOpen
  \bibfield  {author} {\bibinfo {author} {\bibfnamefont {J.}~\bibnamefont
  {Dorfman}}, \bibinfo {author} {\bibfnamefont {P.}~\bibnamefont {Gaspard}}, \
  and\ \bibinfo {author} {\bibfnamefont {T.}~\bibnamefont {Gilbert}},\
  }\href@noop {} {\bibfield  {journal} {\bibinfo  {journal} {Phys. Rev. E}\
  }\textbf {\bibinfo {volume} {66}},\ \bibinfo {pages} {026110} (\bibinfo
  {year} {2002})}\BibitemShut {NoStop}%
\bibitem [{\citenamefont {Barra}, \citenamefont {Gaspard},\ and\ \citenamefont
  {Gilbert}(2009{\natexlab{a}})}]{barra2}%
  \BibitemOpen
  \bibfield  {author} {\bibinfo {author} {\bibfnamefont {F.}~\bibnamefont
  {Barra}}, \bibinfo {author} {\bibfnamefont {P.}~\bibnamefont {Gaspard}}, \
  and\ \bibinfo {author} {\bibfnamefont {T.}~\bibnamefont {Gilbert}},\
  }\href@noop {} {\bibfield  {journal} {\bibinfo  {journal} {Phys. Rev. E}\
  }\textbf {\bibinfo {volume} {80}},\ \bibinfo {pages} {021126} (\bibinfo
  {year} {2009}{\natexlab{a}})}\BibitemShut {NoStop}%
\bibitem [{\citenamefont {Barra}, \citenamefont {Gaspard},\ and\ \citenamefont
  {Gilbert}(2009{\natexlab{b}})}]{barra3}%
  \BibitemOpen
  \bibfield  {author} {\bibinfo {author} {\bibfnamefont {F.}~\bibnamefont
  {Barra}}, \bibinfo {author} {\bibfnamefont {P.}~\bibnamefont {Gaspard}}, \
  and\ \bibinfo {author} {\bibfnamefont {T.}~\bibnamefont {Gilbert}},\
  }\href@noop {} {\bibfield  {journal} {\bibinfo  {journal} {Phys. Rev. E}\
  }\textbf {\bibinfo {volume} {80}},\ \bibinfo {pages} {021127} (\bibinfo
  {year} {2009}{\natexlab{b}})}\BibitemShut {NoStop}%
\bibitem [{\citenamefont {Vollmer}, \citenamefont {T{\'e}l},\ and\
  \citenamefont {M{\'a}ty{\'a}s}(2000)}]{cross1}%
  \BibitemOpen
  \bibfield  {author} {\bibinfo {author} {\bibfnamefont {J.}~\bibnamefont
  {Vollmer}}, \bibinfo {author} {\bibfnamefont {T.}~\bibnamefont {T{\'e}l}}, \
  and\ \bibinfo {author} {\bibfnamefont {L.}~\bibnamefont {M{\'a}ty{\'a}s}},\
  }\href@noop {} {\bibfield  {journal} {\bibinfo  {journal} {J. Stat. Phys.}\
  }\textbf {\bibinfo {volume} {101}},\ \bibinfo {pages} {79} (\bibinfo {year}
  {2000})}\BibitemShut {NoStop}%
\bibitem [{\citenamefont {M{\'a}ty{\'a}s}, \citenamefont {T{\'e}l},\ and\
  \citenamefont {Vollmer}(2000{\natexlab{a}})}]{cross2}%
  \BibitemOpen
  \bibfield  {author} {\bibinfo {author} {\bibfnamefont {L.}~\bibnamefont
  {M{\'a}ty{\'a}s}}, \bibinfo {author} {\bibfnamefont {T.}~\bibnamefont
  {T{\'e}l}}, \ and\ \bibinfo {author} {\bibfnamefont {J.}~\bibnamefont
  {Vollmer}},\ }\href@noop {} {\bibfield  {journal} {\bibinfo  {journal} {Phys.
  Rev. E}\ }\textbf {\bibinfo {volume} {61}},\ \bibinfo {pages} {R3295}
  (\bibinfo {year} {2000}{\natexlab{a}})}\BibitemShut {NoStop}%
\bibitem [{\citenamefont {M{\'a}ty{\'a}s}, \citenamefont {T{\'e}l},\ and\
  \citenamefont {Vollmer}(2000{\natexlab{b}})}]{cross3}%
  \BibitemOpen
  \bibfield  {author} {\bibinfo {author} {\bibfnamefont {L.}~\bibnamefont
  {M{\'a}ty{\'a}s}}, \bibinfo {author} {\bibfnamefont {T.}~\bibnamefont
  {T{\'e}l}}, \ and\ \bibinfo {author} {\bibfnamefont {J.}~\bibnamefont
  {Vollmer}},\ }\href@noop {} {\bibfield  {journal} {\bibinfo  {journal} {Phys.
  Rev. E}\ }\textbf {\bibinfo {volume} {62}},\ \bibinfo {pages} {349} (\bibinfo
  {year} {2000}{\natexlab{b}})}\BibitemShut {NoStop}%
\bibitem [{\citenamefont {Gaspard}(2005)}]{gaspard1}%
  \BibitemOpen
  \bibfield  {author} {\bibinfo {author} {\bibfnamefont {P.}~\bibnamefont
  {Gaspard}},\ }\href@noop {} {\emph {\bibinfo {title} {Chaos, scattering and
  statistical mechanics}}},\ Vol.~\bibinfo {volume} {9}\ (\bibinfo  {publisher}
  {Cambridge University Press},\ \bibinfo {address} {New York},\ \bibinfo
  {year} {2005})\BibitemShut {NoStop}%
\bibitem [{\citenamefont {Vollmer}(2002)}]{vollmers}%
  \BibitemOpen
  \bibfield  {author} {\bibinfo {author} {\bibfnamefont {J.}~\bibnamefont
  {Vollmer}},\ }\href@noop {} {\bibfield  {journal} {\bibinfo  {journal} {Phys.
  Rep.}\ }\textbf {\bibinfo {volume} {372}},\ \bibinfo {pages} {131} (\bibinfo
  {year} {2002})}\BibitemShut {NoStop}%
\bibitem [{\citenamefont {Ishida}(2013)}]{ishida2012}%
  \BibitemOpen
  \bibfield  {author} {\bibinfo {author} {\bibfnamefont {H.}~\bibnamefont
  {Ishida}},\ }\href@noop {} {\bibfield  {journal} {\bibinfo  {journal}
  {Entropy}\ }\textbf {\bibinfo {volume} {15}},\ \bibinfo {pages} {4345}
  (\bibinfo {year} {2013})}\BibitemShut {NoStop}%
\bibitem [{\citenamefont {Patankar}(1980)}]{patankar}%
  \BibitemOpen
  \bibfield  {author} {\bibinfo {author} {\bibfnamefont {S.~V.}\ \bibnamefont
  {Patankar}},\ }\href@noop {} {\emph {\bibinfo {title} {Numerical heat
  transfer and fluid flow}}}\ (\bibinfo  {publisher} {Taylor \& Francis},\
  \bibinfo {year} {1980})\BibitemShut {NoStop}%
\end{thebibliography}

\end{document}